\newtheorem{assumption}{Assumption}
\newtheorem{lemma}{Lemma}
\newtheorem{remark}{Remark}
\newtheorem{definition}{Definition}
\newtheorem{theorem}{Theorem}
\newtheorem{proposition}{Proposition}
\newenvironment{proof}{{\indent \indent \it Proof:}}
\begin{document}
\title{	Adaptive Fast Smooth Second-Order Sliding Mode Control for Attitude Tracking of a 3-DOF Helicopter}

\author{
	\vskip 1em
	
	Xidong Wang, Zhan Li, Zhen He, Huijun Gao

	\thanks{
					
		Xidong Wang is with the Research Institute of Intelligent Control and Systems, School of Astronautics, Harbin Institute of Technology, Harbin 150001, China (e-mail: 17b904039@stu.hit.edu.cn). 
	}
}

\maketitle
	
\begin{abstract}
This paper presents a novel adaptive fast smooth second-order sliding mode control for the attitude tracking of the three degree-of-freedom (3-DOF) helicopter system with lumped disturbances. Combining with a non-singular integral sliding mode surface, we propose a novel adaptive fast smooth second-order sliding mode control method to enable elevation and pitch angles to track given desired trajectories respectively with the features of non-singularity, adaptation to disturbances, chattering suppression and fast finite-time convergence. In addition, a novel adaptive-gain smooth second-order sliding mode observer is proposed to compensate time-varying lumped disturbances with the smoother output compared with the adaptive-gain second-order sliding mode observer. The fast finite-time convergence of the closed-loop system with constant disturbances and the fast finite-time uniformly ultimately boundedness of the closed-loop system with the time-varying lumped disturbances are proved with the finite-time Lyapunov stability theory. Finally, the effectiveness and superiority of the proposed control methods are verified by comparative simulation experiments.
\end{abstract}

\begin{IEEEkeywords}
Adaptive fast smooth second-order sliding mode control (AFSSOSMC), adaptive-gain smooth second-order sliding mode observer (ASSOSMO), 3-DOF helicopter.
\end{IEEEkeywords}

{}

\definecolor{limegreen}{rgb}{0.2, 0.8, 0.2}
\definecolor{forestgreen}{rgb}{0.13, 0.55, 0.13}
\definecolor{greenhtml}{rgb}{0.0, 0.5, 0.0}

\section{Introduction}

\IEEEPARstart{D}{ue} to the merits of vertical take-off and landing, air hovering as well as aggressive maneuver, the small unmanned helicopter has an extremely broad application prospect in military and civil fields\cite{RBFNN2015}. However, small unmanned helicopters have the characteristics of high nonlinearity, strong coupling, under-actuated and extremely vulnerable to lumped disturbances during flight, which make it a challenge to design the high-performance attitude tracking controller \cite{{Li2015F}}.

Because of the similar dynamics with the real helicopter system, the 3-DOF laboratory helicopter, as shown in Figure 1, can act as an ideal experimental platform for testing various advanced control methods of the helicopter \cite{zheng2011}. In recent years, researchers have proposed numerous methods to achieve the attitude tracking object of 3-DOF helicopter and verified these methods by the helicopter experimental platform. These methods can be mainly divided into three parts: linear control \cite{liu2013L,Boby2014L}, nonlinear control \cite{Castaneda2016F,Kara2019F,Li2015F,Vazquez2017F,Yang2020F} and intelligent control \cite{chen2016N,Chen2018N,Zeghlache2017N}.
\begin{figure}[!t]\centering
	\includegraphics[width=8.5cm]{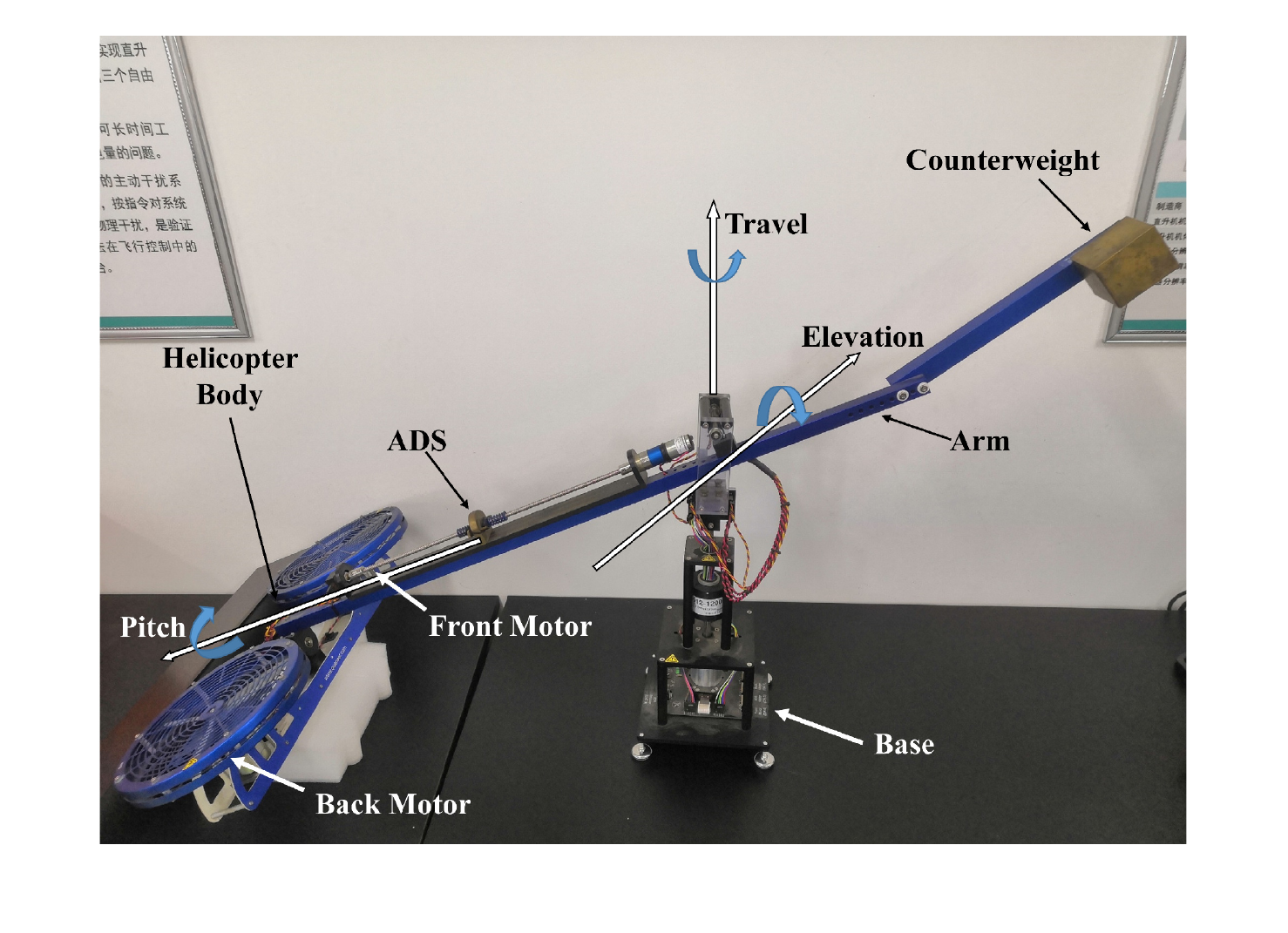}
	\caption{Structure of 3-DOF helicopter system with ADS}\label{FIG_1}
\end{figure}

Among all the above-mentioned control methods, sliding mode control has attracted much attention because of its insensitivity and strong robustness to the disturbances. In the present sliding mode algorithms, the super-twisting algorithm is very popular and owns practical application value due to the features of finite-time convergence, strong robustness and solely requiring the information of sliding mode variables \cite{Levant2003}. In \cite{Moreno2008}, a fast super-twisting algorithm is proposed to solve the problem that the convergence speed becomes slow when the system states are far away from their equilibrium points existing in \cite{Levant2003}. However, the bound of disturbance needing to be known in advance restricts the applications of these super-twisting algorithms. To deal with this problem, an adaptive super-twisting algorithm is proposed in \cite{Shtessel2012}, which can adapt to the disturbance of unknown boundary. Combining the merits of \cite{Moreno2008} and \cite{Shtessel2012}, an adaptive fast second-order sliding mode control method is proposed in \cite{ASTC2015}. An observer is also designed in the light of this method, but the output of the proposed observer is not smooth. In \cite{2007Smooth}, a smooth second-order sliding mode control is proposed to further alleviate the chattering effect existing in \cite{Levant2003}. In \cite{Jiang2018}, a fast smooth second-order sliding mode method is present based on the method of \cite{2007Smooth}. However, this method cannot adapt to the unknown boundary disturbances. Considering all the above-mentioned modified super-twisting methods, there is no method that can simultaneously achieve fast finite-time convergence, adaptation to the disturbances and smooth as of now.

In this paper, inspired by \cite{ASTC2015,Jiang2018}, we propose a class of smooth adaptive fast second-order sliding mode control (SAFSOSMC) method and the method in \cite{ASTC2015} can be regarded as a special case of our proposed method to some extent. The new proposed method integrates the advantages of all the modified super-twisting algorithms mentioned before, including fast finite-time convergence, adaptation to the disturbances and smooth. To the best of the authors' knowledge, this is the first time that a modified super-twisting algorithm can simultaneously own the above three merits. Based on this new method, a novel adaptive-gain smooth second-order sliding mode observer (ASSOSMO) is also proposed to alleviate chattering effect and adjust the parameters automatically without knowing the boundary of the lumped disturbances derivative. The main contributions of this paper can be summarized as follows:
\begin{enumerate}[1)]
	\item A novel SAFSOSMC method is proposed to enable elevation and pitch angles to track given desired trajectories respectively, which not only maintains the characteristics of fast finite-time convergence and adaptation to the disturbance of unknown boundary, but also enormously reduces the chattering effect. Combined with a non-singular integral sliding mode surface\cite{2002Surface}, this method can ensure the fast finite-time convergence of the tracking errors with constant disturbances and the fast finite-time uniformly ultimately boundedness with the time-varying lumped disturbances.
	\item A novel ASSOSMO is proposed to estimate the time-varying lumped disturbances of helicopter system control channels with smoother output than that of the adaptive-gain second-order sliding mode observer (ASOSMO) proposed in \cite{ASTC2015}.
\end{enumerate}

The fast finite-time convergence of the closed-loop system with constant disturbances and the fast finite-time uniformly ultimately boundedness of the closed-loop system with time-varying lumped disturbances will be proved with the corresponding finite-time Lyapunov stability theory. The effectiveness and superiority of the proposed control methods are verified by comparative experiments.

The remainder of this paper is organized as follows. In Section 2, the dynamic model and control objective of the 3-DOF helicopter system, some essential lemmas and the new proposition are given. The controller design process and stability analysis of the closed-loop system are presented in Section 3. Section 4 provides contrastive experiment results and discussion. Section 5 concludes this paper.

Notation: In this paper, we use $\left\| \cdot \right\|$ to denote the Euclidean norm of vectors and ${\mathop{\rm sign}\nolimits} ( \cdot )$ to denote the standard signum function. Moreover, ${\lambda _{\max }}\left(  \cdot  \right)$ and ${\lambda _{\min }}\left(  \cdot  \right)$ are used to denote the maximum and minimum eigenvalues of a matrix, respectively.
\section{Problem formulation and preliminaries}
\subsection{The dynamics of the 3-DOF Helicopter}
As shown in Fig.1, the 3-DOF helicopter system studied in this paper has elevation, pitch and travel motions, which are driven by two DC motors called the front motor and back motor. A positive voltage applied to each motor can generate the elevation motion, and a higher voltage applied on the front motor can produce the positive pitch motion . The travel motion can be generated by thrust vectors when the helicopter body is pitching. Moreover, an active disturbance system (ADS) serving as the lumped disturbances is installed on the arm.

Due to the under-actuated mechanism of the 3-DOF helicopter system, only two of the three degree of freedoms can be controlled to track arbitrary trajectories in the operating domain. In this work, we investigate the elevation and pitch motions, while the travel motion is set to move freely. The models of elevation and pitch channels can be expressed as follows \cite{Li2015F}

\begin{equation}
\begin{aligned}\label{system1}
{J_\alpha }\ddot \alpha &= {K_f}{L_a}\cos (\beta )({V_f} + {V_b}) - mg{L_a}\cos (\alpha )\\
{J_\beta }\ddot \beta & = {K_f}{L_h}({V_f} - {V_b})
\end{aligned}
\end{equation}
where $\alpha $ and $\beta $ represent the elevation and pitch angle respectively. Taking into account the mechanical constraints, the operating domain of the helicopter system is defined as follows
\begin{equation}
\begin{aligned}
- {27.5^o} \le \alpha  \le  + {30^o}\\
 - {45^o} \le \beta  \le  + {45^o}
\end{aligned}
\end{equation}
The definitions and values of the relevant parameters are shown in Table \ref{table:1}.
\begin{table}[!t]
	\renewcommand{\arraystretch}{1.3}
	\caption{The parameters of the 3-DOF helicopter system}\label{table:1}
	\centering
	\resizebox{\columnwidth}{!}{
		\begin{tabular}{l l l}
			\hline\hline \\[-3mm]
			\multicolumn{1}{c}{Symbol} & \multicolumn{1}{c}{Definition} & \multicolumn{1}{l}{Value} \\ \hline
			${J_\alpha }$  & Moment of inertia of elevation axis & $ 1.0348 kg\cdot m^2 $ \\
            ${J_\beta }$  &  Moment of inertia of pitch axis & $0.0451 kg\cdot m^2 $ \\
			\pbox{20cm}{${L_a}$\\\hphantom{1}} & \pbox{20cm}{Distance from elevation axis to the center \\ \hphantom{1}of helicopter body} &  \pbox{20cm}{$0.6600 m $ \\ \hphantom{1}} \\ 
			${L_h}$ & Distance from pitch axis to either motor & $0.1780 m $ \\
			${m}$ & Effective mass of the helicopter & $0.094kg $ \\
			${g}$ & Gravitational acceleration constant & $9.81m/s^2 $ \\
			${K_f}$ & Propeller force-thrust constant & $0.1188N/V $ \\
			${V_f}$ & Front motor voltage input & $[-24,24]V$ \\
			$ {V_b} $ & Back motor voltage input & $[-24,24]V $ \\
			\hline\hline
		\end{tabular}
	}
\end{table}
Denote ${x_1} = \alpha ,{x_2} = \dot \alpha ,{x_3} = \beta ,{x_4} = \dot \beta $. In view of the external disturbance and system uncertainty, the model of helicopter system can be rewritten as
\begin{equation}
\begin{aligned}
{{\dot x}_1} &= {x_2}\\
{{\dot x}_2} &= \frac{{{L_a}}}{{{J_\alpha }}}\cos ({x_3}){u_1} - \frac{g}{{{J_\alpha }}}m{L_a}\cos ({x_1}) + {d_1}(x)\\
{{\dot x}_3} &= {x_4}\\
{{\dot x}_4} &= \frac{{{L_h}}}{{{J_\beta }}}{u_2} + {d_2}(x)
\end{aligned}
\end{equation}
where $x = {[{x_1},{x_2},{x_3},{x_4}]^T}$ denotes the state vector of system \eqref{system1}, which is assumed to be measurable in this study. ${d_1}\left( x \right)$ and ${d_2}\left( x \right)$ represent the lumped disturbances existing in the corresponding control channels. In addition, ${u_1}$and ${u_2}$ are defined by
\begin{equation}
\begin{aligned}
{u_1} &= {K_f}({V_f} + {V_b})\\
{u_2} &= {K_f}({V_f} - {V_b})
\end{aligned}
\end{equation}

The control objective is to design the controllers such that the elevation and pitch angles can track the given desired trajectories respectively within small errors in finite time. For the design of the controllers, the following assumptions are required.

\begin{assumption}
The lumped disturbances and their first derivative are assumed to be bounded, while the value of the bound is unknown, i.e., there exist unknown constants ${\overline d _{1}}$, $\delta_{1}$, ${\overline d _{2}}$, $\delta_{2}$ such that $\left| {{d_1}\left( x \right)} \right| \le {\overline d _{1}}$, $\left| {{{\dot d}_1}\left( x \right)} \right| \le {\delta_{1}}$, $\left| {{d_2}\left( x \right)} \right| \le {\overline d _{2}}$ and $\left| {{{\dot d}_2}\left( x \right)} \right| \le {\delta_{2}}$, where ${\overline d _1} > 0,{\delta _1} \ge 0,{\overline d _2} > 0,{\delta _2} \ge 0$
\end{assumption}
\begin{assumption}
 The desired trajectories given by ${x_{\alpha d}}(t)$, ${x_{\beta d}}(t)$ are assumed to be bounded and available up to their second derivative. 
\end{assumption}
\subsection{Definitions and Lemmas}
To better describe the following \emph{definitions} and \emph{lemmas}, we consider a general system
\begin{equation}
\dot x = g(x(t)),{x_0} = x(0)
\end{equation}
where $g:{W_0} \to {R^n}$  is continuous on an open neighborhood ${W_0} \subset {R^n}$ of the origin and assume that $g(0) = 0$.  $x(t,{x_0})$ denote the solution of (5), which is in the sense of Filippov \cite{1999Filippov}.

\begin{definition}[\cite{1998def1}]
If the origin of (5) is Lyapunov stable and any solution of (5) converges to the equilibrium point in finite time, i.e. $\forall {x_0} \in {W_1} \subset {W_0}$, $x(t,{x_0}) \in {W_1}\backslash \left\{ 0 \right\}$ when $t \in [0,T({x_0})]$ with ${\lim _{t \to T({x_0})}}x(t,{x_0}) = 0$, and $x(t,{x_0}) = 0$ $\forall t > T({x_0})$, where $T:{W_1}\backslash \left\{ 0 \right\} \to (0,\infty )$ is a continuous function, then the system (5) is called finite-time stable.
\end{definition}

\begin{definition}[\cite{2011def2}]
If any solution of (5) converges to a neighborhood of the origin in finite time, the system (5) is called finite-time uniformly ultimately boundedness. 
\end{definition}

\begin{lemma} [\cite{2005lemma1}]
Suppose there exists a continuous and positive-definite function $V:{W_0} \to R$ satisfying the condition
\begin{equation}
\dot V(x) \le  - {c_1}V{(x)^p} - {c_2}V(x)
\end{equation}
where ${c_1} > 0,{c_2} > 0,p \in (0,1)$, then the trajectory of (5) is fast finite-time stable. Moreover, the settling time is given by:
\begin{equation}
T \le \frac{{\ln [1 + {c_2}V{{({x_0})}^{1 - p}}/{c_1}]}}{{{c_2}(1 - p)}}
\end{equation}
\end{lemma}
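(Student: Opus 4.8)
The plan is to reduce the differential inequality for $V$ to a scalar comparison equation and solve it in closed form. First I would invoke the comparison principle: let $v(t)$ be the solution of the scalar initial value problem $\dot v = -c_1 v^p - c_2 v$, $v(0) = V(x_0)$. Along any trajectory $x(t,x_0)$ of (5) the hypothesis gives $\dot V \le -c_1 V^p - c_2 V$, so the standard comparison lemma yields $0 \le V(x(t,x_0)) \le v(t)$ on any interval on which $v$ remains positive. The Lyapunov-stability half of \emph{finite-time stability} in Definition~1 is immediate, since the right-hand side of the inequality is non-positive, so $V$ is non-increasing along trajectories and, by positive-definiteness of $V$, trajectories that start near the origin stay near it.

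Second, I would linearize the comparison equation by the change of variable $W(t) = v(t)^{1-p}$. Differentiating, $\dot W = (1-p)v^{-p}\dot v = (1-p)v^{-p}\bigl(-c_1 v^p - c_2 v\bigr) = -(1-p)c_1 - (1-p)c_2 W$, which is the first-order linear ODE $\dot W + (1-p)c_2 W = -(1-p)c_1$ with $W(0) = V(x_0)^{1-p}$. Its solution is $W(t) = \bigl(W(0) + c_1/c_2\bigr)e^{-(1-p)c_2 t} - c_1/c_2$, valid while $W(t) \ge 0$. Setting $W(T) = 0$ (equivalently $v(T) = 0$) gives $e^{-(1-p)c_2 T} = \dfrac{c_1}{c_1 + c_2 V(x_0)^{1-p}}$, hence $T = \dfrac{\ln\bigl[1 + c_2 V(x_0)^{1-p}/c_1\bigr]}{(1-p)c_2}$, which is exactly the claimed bound.

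Third, I would conclude: extending $v$ by zero for $t \ge T$ is consistent with the equation (and with the Filippov interpretation of (5)), and since $0 \le V(x(t,x_0)) \le v(t)$ we get $V(x(t,x_0)) = 0$ for all $t \ge T$; positive-definiteness of $V$ then forces $x(t,x_0) = 0$ for $t \ge T$. Together with the stability established in the first step, this is precisely fast finite-time stability with settling time bounded by $T(x_0)$ above.

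The main obstacle is the care needed at the instant $v$ reaches zero: the comparison vector field $-c_1 v^p - c_2 v$ is not locally Lipschitz at $v = 0$, so the comparison inequality should first be asserted on every interval $[0,\tau]$ with $v(\tau) > 0$, then passed to the limit as $\tau \uparrow T$, after which the non-negativity of $V$ and the differential inequality keep $V$ pinned at zero. A minor sanity check worth mentioning is that letting $c_2 \to 0^+$ recovers the classical finite-time bound $T \le V(x_0)^{1-p}/\bigl(c_1(1-p)\bigr)$, confirming the estimate is the expected refinement in the presence of the extra linear term.
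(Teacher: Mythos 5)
Your proof is correct. Note that the paper itself does not prove this statement: Lemma~1 is quoted from the cited reference without proof, so there is no internal argument to compare against. Your route --- the comparison principle applied to the scalar equation $\dot v = -c_1 v^p - c_2 v$, the linearizing substitution $W = v^{1-p}$, and the explicit solution of the resulting linear ODE giving $T = \ln\bigl[1 + c_2 V(x_0)^{1-p}/c_1\bigr]/\bigl(c_2(1-p)\bigr)$ --- is exactly the standard derivation of this settling-time estimate in the source literature, and your handling of the non-Lipschitz point $v=0$ (asserting the comparison on $[0,\tau]$ with $v(\tau)>0$ and passing to the limit, then using non-negativity of $V$ to keep it at zero) closes the only delicate step. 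The limiting check $c_2 \to 0^+$ recovering $T \le V(x_0)^{1-p}/\bigl(c_1(1-p)\bigr)$ is a sensible confirmation that the bound refines the classical finite-time estimate.
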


\begin{lemma} [\cite{Jiang2018}]
Suppose there exists a continuous and positive-definite function $V:{W_0} \to R$ satisfying the condition 
\begin{equation}
\dot V(x) \le  - {c_1}V{(x)^{{p_1}}} - {c_2}V(x){\rm{ + }}{c_3}V{(x)^{{p_2}}}
\end{equation}
where ${c_1} > 0,{c_2} > 0,{c_3} > 0,{p_1} \in (0,1),{p_2} \in (0,{p_1})$, then the trajectory of (5) is fast finite-time uniformly ultimately boundedness. Moreover, the settling time is given by:
\begin{equation}
T \le \frac{{\ln [1 + ({c_2} - {\theta _2})V{{({x_0})}^{1 - {p_1}}}/({c_1} - {\theta _1})]}}{{({c_2} - {\theta _2})(1 - {p_1})}}
\end{equation}
where ${\theta _1}$ and ${\theta _2}$ are arbitrary positive constants holding ${\theta _1} \in (0,{c_1}),{\theta _2} \in (0,{c_2})$, then $x(t,{x_0})$ can converge to a neighborhood of origin in a finite time $T$. In addition, the convergent region can be given by:
\begin{equation}
D = \left\{ {x:{\theta _1}V{{(x)}^{{p_1} - {p_2}}} + {\theta _2}V{{(x)}^{1 - {p_2}}} < {c_3}} \right\}
\end{equation}

Define an auxiliary variable ${\theta _3} \in \left( {0,1} \right)$. If ${\theta _3}$ is selected satisfying
\begin{equation}
{\theta _3}^{1 - {p_2}}{\theta _2}^{{p_1} - {p_2}}{c_3}^{1 - {p_2}} = {\theta _1}^{1 - {p_2}}{(1 - {\theta _3})^{{p_1} - {p_2}}}
\end{equation}
then (10) can be reduced to $D = {D_1} = {D_2}$, where
\begin{equation}
\begin{aligned}
{D_1}& = \left\{ {x:V{{(x)}^{{p_1} - {p_2}}} < {\theta _3}{c_3}/{\theta _1}} \right\}\\
{D_2}& = \left\{ {x:V{{(x)}^{1 - {p_2}}} < \left( {1 - {\theta _3}} \right){c_3}/{\theta _2}} \right\}
\end{aligned}
\end{equation}
which means the state $x$ can converge to ${D_1} = {D_2}$ in finite time $T$.
\end{lemma}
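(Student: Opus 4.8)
The plan is to peel off from the right-hand side of the assumed inequality a pair of damping terms with slightly reduced gains, and to show that the remaining, sign-indefinite part is nonpositive outside the claimed residual set $D$; what is left is then a fast-finite-time inequality of exactly the type already covered by Lemma~1. Concretely, fix arbitrary $\theta_1 \in (0, c_1)$ and $\theta_2 \in (0, c_2)$ and rewrite the hypothesis as
\begin{equation*}
\dot V \;\le\; -(c_1-\theta_1)V^{p_1} - (c_2-\theta_2)V \;-\; V^{p_2}\bigl(\theta_1 V^{p_1-p_2} + \theta_2 V^{1-p_2} - c_3\bigr).
\end{equation*}
The map $h(V) := \theta_1 V^{p_1-p_2} + \theta_2 V^{1-p_2}$ is continuous, vanishes at $V=0$, and is strictly increasing on $[0,\infty)$ because $p_1-p_2>0$ and $1-p_2>0$; hence there is a unique $V^\ast>0$ with $h(V^\ast)=c_3$, and $h(V)\ge c_3$ holds exactly on the complement of $D=\{V<V^\ast\}$. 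On that complement the last term above is nonnegative, so it may be discarded to obtain $\dot V \le -(c_1-\theta_1)V^{p_1} - (c_2-\theta_2)V$ with $c_1-\theta_1>0$, $c_2-\theta_2>0$ and $p_1\in(0,1)$.

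Next I would combine Lemma~1 with a scalar comparison argument. As long as the trajectory remains outside $D$ the inequality of the previous paragraph holds, so $V(t)$ is dominated by the solution $y(t)$ of $\dot y = -(c_1-\theta_1)y^{p_1}-(c_2-\theta_2)y$ with $y(0)=V(x_0)$; by Lemma~1, $y$ reaches the origin --- and therefore, a fortiori, the level $V^\ast$ --- within the time $\ln[1+(c_2-\theta_2)V(x_0)^{1-p_1}/(c_1-\theta_1)]/[(c_2-\theta_2)(1-p_1)]$, which is precisely the settling-time bound claimed in the statement. Positive invariance of the sublevel set $\{V\le V^\ast\}$ follows because on its boundary the discarded term is zero while the damping part is strictly negative, so $\dot V<0$ there and $V$ cannot escape; together these give the asserted fast finite-time uniform ultimate boundedness with residual set $D$.

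Finally, for the refinement $D=D_1=D_2$, I would again use strict monotonicity of $h$: $D=\{V<V^\ast\}$, while $D_1=\{V<V_1^\ast\}$ with $V_1^\ast=(\theta_3 c_3/\theta_1)^{1/(p_1-p_2)}$ and $D_2=\{V<V_2^\ast\}$ with $V_2^\ast=((1-\theta_3)c_3/\theta_2)^{1/(1-p_2)}$. Requiring $V_1^\ast=V_2^\ast$ is, after clearing the fractional exponents, equivalent to the displayed relation characterizing $\theta_3$; a root $\theta_3\in(0,1)$ exists by the intermediate value theorem, since one side of that relation vanishes at $\theta_3=0$, the other vanishes at $\theta_3=1$, and both sides are continuous and monotone in $\theta_3$ on $(0,1)$. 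For that choice one checks directly that $h(V_1^\ast)=\theta_1(V_1^\ast)^{p_1-p_2}+\theta_2(V_2^\ast)^{1-p_2}=\theta_3 c_3+(1-\theta_3)c_3=c_3$, so $V_1^\ast=V_2^\ast=V^\ast$ and the three sets coincide.

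I expect the main obstacle to be the comparison-and-invariance bookkeeping rather than the algebra: one must argue carefully that $V$ cannot overshoot past $\partial D$ and re-enter at a larger value, i.e. that $\{V\le V^\ast\}$ is positively invariant, and that the Lemma~1 estimate --- derived for convergence all the way to the origin --- is legitimately an upper bound for the shorter time needed only to reach the nonzero level $V^\ast$ at which the trajectory first enters $D$.
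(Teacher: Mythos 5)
The paper does not actually prove this statement: Lemma~2 is quoted from \cite{Jiang2018} without proof, so there is nothing internal to compare against. Judged on its own, your argument is correct and is the standard one for this kind of result: splitting $c_1=(c_1-\theta_1)+\theta_1$, $c_2=(c_2-\theta_2)+\theta_2$, absorbing $c_3V^{p_2}$ into $-V^{p_2}\bigl(\theta_1V^{p_1-p_2}+\theta_2V^{1-p_2}-c_3\bigr)$, using monotonicity of $h(V)=\theta_1V^{p_1-p_2}+\theta_2V^{1-p_2}$ to characterize $D=\{V<V^\ast\}$, and then invoking Lemma~1 through a comparison solution together with positive invariance of $\{V\le V^\ast\}$ (where $\dot V<0$ on the boundary); this delivers exactly the stated settling-time bound and residual set. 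One point needs care in the refinement: carrying out your own ``clearing of fractional exponents'' from $V_1^\ast=V_2^\ast$, i.e.\ $\bigl(\theta_3c_3/\theta_1\bigr)^{1-p_2}=\bigl((1-\theta_3)c_3/\theta_2\bigr)^{p_1-p_2}$, gives ${\theta_3}^{1-p_2}{\theta_2}^{p_1-p_2}{c_3}^{1-p_1}={\theta_1}^{1-p_2}(1-\theta_3)^{p_1-p_2}$, whereas the displayed condition (11) carries ${c_3}^{1-p_2}$ on the left and no $c_3$ on the right; the two agree only when $c_3=1$, so your claim that your relation is ``equivalent to the displayed relation'' is not literally true and the exponent of $c_3$ in (11) appears to be a typo inherited from the citation. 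Your underlying construction (choosing $\theta_3$ so that $\theta_1(V^\ast)^{p_1-p_2}=\theta_3c_3$ and $\theta_2(V^\ast)^{1-p_2}=(1-\theta_3)c_3$, existence of such $\theta_3\in(0,1)$ by the intermediate value theorem, and the check $h(V^\ast)=\theta_3c_3+(1-\theta_3)c_3=c_3$) is sound and in fact identifies the correct form of the condition.
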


\begin{remark}
The meaning of fast in \emph{Lemma 1} and \emph{Lemma 2} is that the solution of (5) can quickly converge to the origin or a neighborhood of the origin regardless of the distance between the initial state and the origin, while the original finite-time stable converges slowly when the initial state is far away from the origin.
\end{remark}

\begin{lemma} [\cite{2002Surface}]
Considering the following system
\begin{equation}
\begin{aligned}
{{\dot x}_1} &= {x_2}\\
{{\dot x}_2} &=  - {\gamma _1}{\left| {{x_1}} \right|^{{p_3}}}{\mathop{\rm sgn}} ({x_1}) - {\gamma _2}{\left| {{x_2}} \right|^{2{p_3}/(1 + {p_3})}}{\mathop{\rm sgn}} ({x_2})
\end{aligned}
\end{equation}
where ${\gamma _1} > 0,{\gamma _2} > 0,{p_3} \in (0,1)$, the trajectory of the system (13) is finite-time stable, i.e. $x_1,x_2$ can converge to the origin in finite time.
\end{lemma}
\subsection{A New Proposition}
Motivated by [18], we extend the result in [16] to obtain the following new proposition and this generalization is non-trivial.
\begin{proposition}
Considering the following system
\begin{equation}
\begin{aligned}
\dot s &=  - {L_1}(t){\left| s \right|^{\frac{{m - 1}}{m}}}{\mathop{\rm sgn}} (s) - {L_2}(t)s + \varphi \\
\dot \varphi  &=  - {L_3}(t){\left| s \right|^{\frac{{m - 2}}{m}}}{\mathop{\rm sgn}} (s) - {L_4}(t)s + d\left( t \right)
\end{aligned}
\end{equation}
where $\left| {d\left( t \right)} \right| \le \delta $, $\delta$ is an unknown non-negative constant and the adaptive gains ${L_1}(t),{L_2}(t),{L_3}(t),{L_4}(t)$ are expressed as
\begin{equation}
\begin{aligned}
{L_1}(t)&={k_1}L_0^{{\textstyle{{m - 1} \over m}}}\left( t \right),{\rm{  }}{L_2}(t) = {k_2}L_0 \left( t \right)\\
{L_3}(t)& = {k_3}L_0^{{\textstyle{{2m - 2} \over m}}}\left( t \right),{\rm{  }}{L_4}(t) = {k_4}L_0^2 \left( t \right)
\end{aligned}
\end{equation}
where ${k_1},{k_2},{k_3},{k_4},m$ are positive constants satisfying
\begin{equation}
{m^2}{k_3}{k_4} > \left( {\frac{{{m^3}{k_3}}}{{m - 1}} + \left( {4{m^2} - 4m + 1} \right){k_1}^2} \right){k_2}^2,m > 2
\end{equation}

${L_0}(t)$ is a scalar, positive, and time-varying function. The ${L_0}(t)$ satisfies
\begin{equation}
\dot{L_0}\left( t \right) =
\begin{cases}
{\kappa},&if \quad \left| s \right| \ge \varepsilon \hfill \\
0,&else \hfill \\
\end{cases} 
\end{equation}
where $\kappa$ is a positive constant, $\varepsilon$ is an arbitrary small positive value. Then we can obtain the following conclusions

(\romannumeral1) If $d = 0$, then $s,\dot s$ can fast converge to the origin in finite time.

(\romannumeral2) If $d \ne 0$, then $s,\dot s$ can fast converge to a neighborhood of the origin in finite time.
\end{proposition}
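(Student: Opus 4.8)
The plan is to build a single strict Lyapunov function $V$ for the $(s,\varphi)$-subsystem of the Proposition, to show that along its Filippov solutions $\dot V\le -c_1V^{p_1}-c_2V$ when $d\equiv0$ and $\dot V\le -c_1V^{p_1}-c_2V+c_3V^{p_2}$ when $d$ is only bounded, and then to read statement (\romannumeral1) off Lemma~1 and statement (\romannumeral2) off Lemma~2. All of the difficulty lies in producing a $\dot V$-estimate of exactly these two shapes; in particular the algebraic condition on $k_1,\dots,k_4,m$ must turn out to be precisely what guarantees negative definiteness of the dominant quadratic form in $\dot V$.

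The first step handles the time-varying adaptive gain. I would introduce the weighted rescaling $\tilde s=L_0^{\,m-1}(t)\,s$, $\tilde\varphi=L_0^{(m-1)(m-2)/m}(t)\,\varphi$; using the specific exponents of $L_0$ in $L_1,\dots,L_4$, a direct computation shows that the two equations collapse to
\begin{equation*}
\dot{\tilde s}=L_0^{\frac{2(m-1)}{m}}\Big[-k_1|\tilde s|^{\frac{m-1}{m}}\operatorname{sgn}(\tilde s)-k_2\tilde s+\tilde\varphi\Big]+(m-1)\frac{\dot L_0}{L_0}\,\tilde s ,
\end{equation*}
and symmetrically $\dot{\tilde\varphi}=L_0^{2(m-1)/m}[-k_3|\tilde s|^{(m-2)/m}\operatorname{sgn}(\tilde s)-k_4\tilde s]+L_0^{(m-1)(m-2)/m}d(t)+(m-1)(m-2)m^{-1}(\dot L_0/L_0)\tilde\varphi$. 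Because $\dot L_0=\kappa$ only while $s\ne0$ and $L_0$ is nondecreasing, the two $\dot L_0/L_0$ corrections are $O(1/L_0)$ and become negligible after a finite time, so what remains is the fixed-coefficient system with gains $k_1,\dots,k_4$, up to the scalar time reparametrisation $\mathrm{d}\tau=L_0^{2(m-1)/m}\,\mathrm{d}t$. In case (\romannumeral1) the resulting finite-time convergence of $\tilde s$ forces $L_0$ to remain bounded, so $s=L_0^{-(m-1)}\tilde s$ and, since $\dot s=\varphi$ on $\{s=0\}$, $\dot s=L_0^{-(m-1)(m-2)/m}\tilde\varphi$ reach the origin in finite time, as claimed.

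On the normalised system I would take a Lyapunov function of the fast-super-twisting type used in \cite{Jiang2018,2007Smooth}, namely $V=\zeta^{T}P\zeta$ with $\zeta=\big[\,|\tilde s|^{(m-1)/m}\operatorname{sgn}(\tilde s),\ \tilde s,\ \tilde\varphi\,\big]^{T}$ and $P=P^{T}>0$ fixed later; the extra $\tilde s$-component is what lets $V$ see both the near-origin (finite-time) and the far-from-origin (fast, exponential-like) regimes. Differentiating and grouping the terms of $\dot V$ by their weighted-homogeneity degree (weight $m$ for $\tilde s$, weight $m-1$ for $\tilde\varphi$) produces: a homogeneous part of degree $2m-3$ generated by the $k_1,k_3$-terms and by $\tilde\varphi$ inside $\dot{\tilde s}$, which can be written as $-|\tilde s|^{-1/m}$ times a quadratic form in $\big(|\tilde s|^{(m-1)/m}\operatorname{sgn}(\tilde s),\tilde\varphi\big)$ and hence bounded by $-c_1V^{p_1}$ with $p_1=\tfrac{2m-3}{2m-2}$; parts of degrees $2m-2$ and $2m$ from the $k_2,k_4$-terms that, for a suitable $P$, give $-c_2V$ together with a discarded super-linear negative term; and, when $d\ne0$, the term $(\partial V/\partial\tilde\varphi)\,L_0^{(m-1)(m-2)/m}d$, bounded by $c_3V^{p_2}$ with $p_2=\tfrac12$. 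Thus $m>2$ is exactly the requirement $0<p_2<p_1<1$ of Lemma~2, and I expect the gain inequality $m^2k_3k_4>\big(\tfrac{m^3k_3}{m-1}+(2m-1)^2k_1^2\big)k_2^2$ — whose right-hand side already displays the square $(2m-1)^2$ of a discriminant-like quantity — to be precisely the condition under which $P$ can be chosen so that the matrix of the degree-$(2m-3)$ form, together with the companion matrix controlling the $-c_2V$ part, is positive definite; I would confirm this by writing out $A^{T}P+PA$ for the $\zeta$-linearised dynamics and checking that its leading principal minors have the required signs exactly under that inequality.

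The step I expect to be the main obstacle is the negative-definiteness of the degree-$(2m-3)$ quadratic form: differentiating $|\tilde s|^{(m-1)/m}\operatorname{sgn}(\tilde s)$ brings in factors $|\tilde s|^{-1/m}$, so $\dot V$ carries potentially singular cross-terms such as $|\tilde s|^{-1/m}\tilde\varphi^2$, and one must show that the sign pattern forced on $P$ absorbs them into the negative part rather than leaving an indefinite form — it is here that the full strength of the gain inequality is consumed — while simultaneously preventing the $O(\dot L_0/L_0)$ corrections and the non-homogeneous $k_2,k_4$-terms from destroying definiteness, and (for (\romannumeral2)) checking that the gain grows in finite time past the level at which it dominates the unknown bound $\delta$. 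Once $\dot V\le -c_1V^{p_1}-c_2V$, respectively $\dot V\le -c_1V^{p_1}-c_2V+c_3V^{p_2}$, has been established, Lemma~1 gives fast finite-time convergence of $(s,\varphi)$ — hence of $s$ and $\dot s$ — to the origin, proving (\romannumeral1), and Lemma~2 gives fast finite-time uniform ultimate boundedness with the residual set $D=D_1=D_2$ of that lemma, proving (\romannumeral2).
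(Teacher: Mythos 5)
Your overall strategy is the one the paper itself uses: a quadratic Lyapunov function $V=\zeta^{T}P\zeta$ built on a super-twisting-type vector $[\,|z|^{(m-1)/m}\operatorname{sgn}(z),\ z,\ \varphi\,]$, a differential inequality $\dot V\le -c_1V^{p_1}-c_2V$ (with the extra $c_3V^{1/2}$ when $d\ne0$) with $p_1=(2m-3)/(2m-2)$, and then Lemma~1, respectively Lemma~2. The algebra you defer (choosing $P$ and verifying definiteness of the two quadratic forms) is exactly what the paper carries out explicitly with its matrices $P$, $\Omega_1$, $\Omega_2$ in (20)--(22), and your rewriting of the gain condition as $m^{2}k_3k_4>\bigl(\tfrac{m^{3}k_3}{m-1}+(2m-1)^{2}k_1^{2}\bigr)k_2^{2}$ coincides with (16). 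The only structural difference is the treatment of $L_0$: the paper folds $L_0$-weights into the state vector $\xi$ of (18) while leaving $\xi_3=\varphi$ unweighted, whereas you rescale both variables, $\tilde s=L_0^{\,m-1}s$ and $\tilde\varphi=L_0^{(m-1)(m-2)/m}\varphi$ (your rescaled equations are algebraically correct), and your handling of the $\dot L_0/L_0$ corrections is at the same level of rigor as the paper's ``positive in finite time'' argument.

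There is, however, a genuine gap in your case (\romannumeral2), caused precisely by the decision to scale $\varphi$. In the $(\tilde s,\tilde\varphi)$ coordinates the disturbance enters as $L_0^{(m-1)(m-2)/m}d(t)$, and for $m>2$ this weight is unbounded: under a persistent disturbance $s$ does not vanish identically, so the adaptive law (17) keeps $\dot L_0=\kappa$ and $L_0\to\infty$. Hence the term you propose to bound by $c_3V^{p_2}$ has a time-varying, unbounded coefficient, and Lemma~2 --- which requires constant $c_1,c_2,c_3$ --- cannot be invoked as stated; you would either have to redo the estimate with time-varying coefficients (and also track how the residual set in the scaled variables translates back through $s=L_0^{-(m-1)}\tilde s$), or, more simply, adopt the paper's construction in which $\varphi$ is left unscaled, so that the disturbance appears in $\dot V$ through the genuinely constant coefficient $\delta\,\|\sigma_1\|/\lambda_{\min}^{1/2}(P)$ and Lemma~2 applies directly. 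Case (\romannumeral1) is unaffected, since there $d\equiv 0$ and your plan reduces, up to bookkeeping, to the paper's argument.
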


\begin{proof}
Define a new state vector
\begin{equation}
\xi  = \left[ {\begin{array}{*{20}{c}}
{{\xi _1}}\\
{{\xi _2}}\\
{{\xi _3}}
\end{array}} \right] = \left[ {\begin{array}{*{20}{c}}
{L_0^{{\textstyle{{m - 1} \over m}}}\left( t \right){{\left| s \right|}^{{\textstyle{{m - 1} \over m}}}}{\mathop{\rm sgn}} (s)}\\
{L_0 \left( t \right)s}\\
\varphi 
\end{array}} \right]
\end{equation} 

After taking the derivative of $\xi$, we can obtain
\begin{equation}
\begin{aligned}
\dot \xi  =& L_0^{{\textstyle{{m - 1} \over m}}}{\left| s \right|^{\frac{{ - 1}}{m}}}\left[ {\begin{array}{*{20}{c}}
{ - \frac{{m - 1}}{m}{k_1}}&{ - \frac{{m - 1}}{m}{k_2}}&{\frac{{m - 1}}{m}}\\
0&0&0\\
{ - {k_3}}&0&0
\end{array}} \right]\xi \\
 &+ L_0 \left[ {\begin{array}{*{20}{c}}
0&0&0\\
{{\rm{ - }}{k_1}}&{ - {k_2}}&1\\
0&{ - {k_4}}&0
\end{array}} \right]\xi  + \left[ {\begin{array}{*{20}{c}}
{\frac{{m - 1}}{m}\frac{{{{\dot L}_0}(t)}}{{{L_0}(t)}}{\xi _1}}\\
{\frac{{{{\dot L}_0}(t)}}{{{L_0}(t)}}{\xi _2}}\\
{d(t)}
\end{array}} \right]
\end{aligned}
\end{equation}

Then a candidate Lyapunov function is chosen as $V\left( \xi  \right) = {\xi ^T}P\xi$, with
\begin{equation}
\begin{aligned}
P = \frac{1}{2}\left[ {\begin{array}{*{20}{c}}
{\frac{{2m}}{{m - 1}}{k_3} + {k_1}^2}&{{k_1}{k_2}}&{ - {k_1}}\\
{{k_1}{k_2}}&{2{k_4} + {k_2}^2}&{ - {k_2}}\\
{ - {k_1}}&{ - {k_2}}&2
\end{array}} \right]
\end{aligned}
\end{equation}
where $P$ is a symmetric positive definite matrix. Taking the derivative of $V\left( \xi  \right)$ along the trajectories of system (14) yields 
\begin{equation}
	\begin{aligned}
		\dot V\left( \xi  \right) = & - {L_0}{\left| {{\xi _1}} \right|^{{\textstyle{{ - 1} \over {m - 1}}}}}{\xi ^T}{\Omega _1}\xi  - L_0 {\xi ^T}{\Omega _2}\xi  \\
		                            &+ {\sigma _1}d\left( t \right)\xi  + \frac{{2m - 2}}{m}\frac{{{{\dot L}_0}(t)}}{{{L_0}(t)}}{\sigma _2}P\xi 
	\end{aligned}	
\end{equation}
where $\sigma_1 = \left[-k_1 -k_2 \quad 2 \right]$, $\sigma_2 =\left[\xi _1 \quad \frac{m}{m-1}{\xi _2} \quad 0\right]$ and
\begin{equation}\small
\begin{aligned}
&{\Omega _1}=\\
&\frac{{{k_1}}}{m}\left[ {\begin{array}{*{20}{c}}
{{k_3}m + k_1^2\left( {m - 1} \right)}&0&{ - {k_1}\left( {m - 1} \right)}\\
0&{{k_4}m + k_2^2\left( {3m - 1} \right)}&{ - {k_2}\left( {2m - 1} \right)}\\
{ - {k_1}\left( {m - 1} \right)}&{ - {k_2}\left( {2m - 1} \right)}&{m - 1}
\end{array}} \right]\\
&{\Omega _2} = {k_2}\left[ {\begin{array}{*{20}{c}}
{{k_3} + k_1^2\left( {3m - 2} \right)/m}&0&0\\
0&{{k_4} + k_2^2}&{ - {k_2}}\\
0&{ - {k_2}}&1
\end{array}} \right]
\end{aligned}
\end{equation}

It is easy to prove that the matrices $\Omega _1$ and $\Omega _2$ both are positive definite with (16). By using
\begin{equation}
{\lambda _{\min }}\left( P \right){\left\| \xi  \right\|^2} \le V \le {\lambda _{\max }}\left( P \right){\left\| \xi  \right\|^2}
\end{equation}
(21) can be formulated as
\begin{equation}
\begin{aligned}
\dot V \le & - {L_0}\left( t \right)\frac{{{\lambda _{\min }}\left( {{\Omega _1}} \right)}}{{\lambda _{\max }^{{p_1}}\left( P \right)}}{V^{{p_1}}} - L_0 \left( t \right)\frac{{{\lambda _{\min }}\left( {{\Omega _2}} \right)}}{{{\lambda _{\max }}\left( P \right)}}V \\
&+ \frac{{\delta {{\left\| {{\sigma _1}} \right\|}_2}}}{{\lambda _{\min }^{{\raise0.5ex\hbox{$\scriptstyle 1$}
\kern-0.1em/\kern-0.15em
\lower0.25ex\hbox{$\scriptstyle 2$}}}\left( P \right)}}{V^{\frac{1}{2}}} + \frac{{m - 1}}{m}\frac{{{{\dot L}_0}}}{{{L_0}}}{\xi ^T}Q\xi 
\end{aligned}
\end{equation}
where ${p_1} = \left( {2m - 3} \right)/\left( {2m - 2} \right)$, $Q = diag\left[ {{q_1},{q_2},{q_3}} \right]$ is a diagonal matrix with positive diagonal elements, which are expressed as follows
\begin{equation}
\begin{aligned}
{q_1}& = \frac{{2m}}{{m - 1}}{k_3} + k_1^2 + \frac{{\left( {2m - 1} \right){k_1}{k_2}}}{{2\left( {m - 1} \right)}} + \frac{{{k_1}}}{2}\\
{q_2} &= \frac{m}{{2\left( {m - 1} \right)}}\left( {4{k_4} + 2k_2^2 + {k_2}} \right) + \frac{{\left( {2m - 1} \right){k_1}{k_2}}}{{2\left( {m - 1} \right)}}\\
{q_3} &= \frac{{{k_1}}}{2} + \frac{{m{k_2}}}{{2\left( {m - 1} \right)}}
\end{aligned}
\end{equation}

Then (24) can be further rewritten as
\begin{equation}\small
\dot V \le  - {L_0}\left( t \right){n_1}{V^{{p_1}}} + {n_2}{V^{\frac{1}{2}}} - \left( {L_0 \left( t \right){n_3} - \frac{{2m - 2}}{m}{n_4}\frac{{{{\dot L}_0}}}{{{L_0}}}} \right)V
\end{equation}
where
\begin{equation}
\begin{aligned}
{n_1} &= \frac{{{\lambda _{\min }}\left( {{\Omega _1}} \right)}}{{\lambda _{\max }^{{p_1}}\left( P \right)}}\\
{n_2} &= \frac{{\delta {{\left\| {{\sigma _1}} \right\|}_2}}}{{\lambda _{\min }^{{\raise0.5ex\hbox{$\scriptstyle 1$}
\kern-0.1em/\kern-0.15em
\lower0.25ex\hbox{$\scriptstyle 2$}}}\left( P \right)}}\\
{n_3} &= \frac{{{\lambda _{\min }}\left( {{\Omega _2}} \right)}}{{{\lambda _{\max }}\left( P \right)}}\\
{n_4} &= \frac{{{\lambda _{\max }}\left( Q \right)}}{{2{\lambda _{\min }}\left( P \right)}}
\end{aligned}
\end{equation}

(\romannumeral1) If $d(t)= 0$, then $\delta =0$, (26) will become
\begin{equation}
\dot V \le  - {L_0}\left( t \right){n_1}{V^{{p_1}}} - \left( {L_0 \left( t \right){n_3} - \frac{{2m - 2}}{m}{n_4}\frac{{{{\dot L}_0}}}{{{L_0}}}} \right)V
\end{equation}

Due to ${\dot L_0}\left( t \right) \ge 0$, $L_0 \left( t \right){n_3} - \left( {2m - 2} \right){n_4}{\dot L_0}/\left( {{L_0}m} \right)$ is positive in finite time. It follows
from (28) that 
\begin{equation}
\dot V \le  - {c_1}{V^{{p_1}}} - {c_2}V
\end{equation}
where ${c_1}$ and ${c_2}$ are positive constants, ${p_1} \in (0.5,1)$. By using \emph{Lemma 1}, $\xi$ can fast converge to origin in  finite time, then $s,\dot s$ can fast converge to the origin in finite time.

(\romannumeral2) If $d\left( t \right) \ne 0$, with the same analysis of (\romannumeral1), it follows from (26) that
\begin{equation}
\dot V \le  - {c_4}{V^{{p_1}}} - {c_5}V{\rm{ + }}{c_3}{V^{{\textstyle{1 \over 2}}}}
\end{equation}
where ${c_3},{c_4}$and ${c_5}$ are positive constants, ${p_1} \in (0.5,1)$. By using \emph{Lemma 2}, $\xi$ can fast converge to a neighborhood of origin in  finite time. In addition, the region can be given by
\begin{equation}
D = \left\{ {\xi :{\theta _1}V{{(\xi )}^{{p_1} - {p_2}}} + {\theta _2}V{{(\xi )}^{1 - {p_2}}} < {c_3}} \right\}
\end{equation}
where ${\theta _1} \in (0,{c_4}),{\theta _2} \in (0,{c_5})$, ${p_2} = 0.5$.

Define an auxiliary variable ${\theta _3} \in \left( {0,1} \right)$.If ${\theta _3}$ is selected satisfying
\begin{equation}
{\theta _3}^{1 - {p_2}}{\theta _2}^{{p_1} - {p_2}}{c_3}^{1 - {p_2}} = {\theta _1}^{1 - {p_2}}{(1 - {\theta _3})^{{p_1} - {p_2}}}
\end{equation}
$\xi$ can converge to $D = {D_1} = {D_2}$ in finite time ${T_1}$, where
\begin{equation}
{T_1} \le \frac{{\ln [1 + ({c_5} - {\theta _2})V{{({\xi _0})}^{1 - {p_1}}}/({c_4} - {\theta _1})]}}{{({c_5} - {\theta _2})(1 - {p_1})}}
\end{equation}

\begin{equation}
\begin{aligned}
\begin{array}{l}
{D_1} = \left\{ {\xi :V{{(\xi )}^{{p_1} - {p_2}}} < {\theta _3}{c_3}/{\theta _1}} \right\}\\
{D_2} = \left\{ {\xi :V{{(\xi )}^{1 - {p_2}}} < \left( {1 - {\theta _3}} \right){c_3}/{\theta _2}} \right\}
\end{array}
\end{aligned}
\end{equation}

Design a region $D_4$ as
\begin{equation}
\begin{aligned}
{D_4} &= \left\{ {\xi :{\lambda _{\min }}\left( P \right){{\left\| \xi  \right\|}^2} < {{\left( {1 - {\theta _3}} \right)}^2}c_3^2/{\theta _2}^2} \right\}\\
 &= \left\{ {\xi :\left\| \xi  \right\| < {\Delta}} \right\}
\end{aligned}
\end{equation}
where ${\Delta}{\rm{ = }}{\lambda _{\min }}{\left( P \right)^{ - 1/2}}\left( {1 - {\theta _3}} \right){c_3}/{\theta _2}$. In terms of (23), it follows from (34) and (35) that $D_4$ contains $D_2$. Considering the definition of $\xi$, the following inequalities $\left\| {{\xi _1}} \right\| \le \left\| \xi  \right\|$, $\left\| {{\xi _2}} \right\| \le \left\| \xi  \right\|$and $\left\| {{\xi _3}} \right\| \le \left\| \xi  \right\|$ hold. Then the set
\begin{equation}
{D_5} = \left\{ {{\xi _1},{\xi _2},{\xi _3}:\left\| {{\xi _1}} \right\| < {\Delta},\left\| {{\xi _2}} \right\| < {\Delta},\left\| {{\xi _3}} \right\| < {\Delta}} \right\}
\end{equation}
contains the set $D_4$. Therefore, since $\xi$ converges to $D_1$ in $T_1$ , it will also converge to $D_5$ in $T_1$  . Using (14), (15) and (18), we can obtain that $s,\dot s$ converge to a neighborhood of the origin in $T_1$. 
\end{proof}

\begin{remark}
By selecting different values of $m$ in (14), we can obtain a series of smooth adaptive fast second-order sliding mode control methods and if $m = 2$, (14) can be transformed into the result in [16]. However, our proposed methods can greatly reduce the chattering effect existing in [16]. The newly proposed proposition will be used in the design of controller and observer for the helicopter system in the next section, and the superiority of our proposed control method will be verified through the comparative simulation experiments.
\end{remark}

\section{Controller and Observer Design}
In this section, we will give the design process of controller and observer in detail. First, the system (3) is transformed into the tracking error system. Then a control scheme, consisting of a non-singular integral sliding mode surface and a novel SAFSOSMC method, is proposed for the tracking error system subject to constant disturbance and time-varying lumped disturbance. Finally, a novel ASSOSMO is present to estimate disturbances.
\subsection{System transformation}
Defining the tracking errors ${e_1} = {x_1} - {x_{\alpha d}}(t),{e_2} = {x_2} - {\dot x_{\alpha d}}(t),{e_3} = {x_3} - {x_{\beta d}}(t),{e_4} = {x_4} - {\dot x_{\beta d}}(t)$. Then the tracking error system is given by
\begin{equation}
\begin{aligned}
{\dot e_1}& = {e_2}\\
{\dot e_2} &= \frac{{{L_a}}}{{{J_\alpha }}}\cos ({x_3}){u_1} - \frac{g}{{{J_\alpha }}}m{L_a}\cos ({x_1}) - {\ddot x_{\alpha d}}(t) + {d_1}(x)\\
{\dot e_3} &= {e_4}\\
{\dot e_4} &= \frac{{{L_h}}}{{{J_\beta }}}{u_2} - {\ddot x_{\beta d}}(t) + {d_2}(x)
\end{aligned}
\end{equation}

The control objective is then transformed into the design of finite-time controller so that ${e_1}$ and ${e_3}$ can converge to the origin or a small region of the origin. To facilitate the design of controllers, we express the models of the elevation and pitch channels in a unified form 
\begin{equation}
\begin{aligned}
&{\dot e_i} = {e_{i + 1}}\\
&{\dot e_{i + 1}} = {g_i}{v_i} + {f_i} + {d_i}
\end{aligned}
\end{equation}
where ${f_i}$ and ${g_i},i = 1,2$ represent the dynamics of corresponding channels, which can be express as follows
\begin{equation}
\begin{aligned}
{f_1} &=  - \frac{g}{{{J_\alpha }}}m{L_a}\cos ({x_1}) - {{\ddot x}_{\alpha d}}(t)\\
{g_1} &= \frac{{{L_a}}}{{{J_\alpha }}}\\
{f_2} &=  - {{\ddot x}_{\beta d}}(t)\\
{g_2} &= \frac{{{L_h}}}{{{J_\beta }}}
\end{aligned}
\end{equation}
and the auxiliary control inputs ${v_1}$ and ${v_2}$ are defined as
\begin{equation}
\begin{aligned}
{v_1}& = \cos ({x_3}){u_1}\\
{v_2}& = {u_2}
\end{aligned}
\end{equation}

Then, we will take the elevation channel as an example to expound the design process and the controller for the pitch channel can be designed following a similar process. Afterwards, we can get the auxiliary control inputs ${v_1}$ and ${v_2}$. By using (4) and (40), we can easily obtain the true control inputs ${V_f}$ and${V_b}$.
\subsection{Controller design with constant disturbance}
Considering the tracking error system of the elevation channel
\begin{equation}
\begin{aligned}
&{{\dot e}_1} = {e_2}\\
&{{\dot e}_2} = {g_1}{v_1} + {f_1} + {d_1}
\end{aligned}
\end{equation}
where ${d_1}$ is constant disturbance, i.e. ${\dot d_1} = 0$. We adopt a non-singular integral sliding mode surface from [19], which is defined as follows
\begin{equation}
s = {e_2} + \int_0^t {zd\tau }
\end{equation}
where $z = {\gamma _3}{\left| {{e_1}} \right|^{{p_4}}}{\mathop{\rm sgn}} ({e_1}) + {\gamma _4}{\left| {{e_2}} \right|^{2{p_4}/(1 + {p_4})}}{\mathop{\rm sgn}} ({e_2})$ and ${\gamma _3} > 0,{\gamma _4} > 0,{p_4} \in (0,1)$.

Based on this sliding mode variable and \emph{Proposition 1}, a novel smooth adaptive fast second-order sliding mode controller can be designed as 
\begin{equation}
\begin{aligned}
{v_1} =& {g_1}^{ - 1}\left\{ { - {L_1}(t){{\left| s \right|}^{\frac{{m - 1}}{m}}}{\mathop{\rm sgn}} (s) - {L_2}(t)s - z }\right.\\
 &\phantom{=\;\;}\left.{- {f_1} - \int_0^t {\left[ {{L_3}(t){{\left| s \right|}^{\frac{{m - 2}}{m}}}{\mathop{\rm sgn}} (s) + {L_4}(t)s} \right]d\tau } } \right\}
\end{aligned}
\end{equation}
where ${L_1}(t),{L_2}(t),{L_3}(t),{L_4}(t)$ are formulated the same as (15) and $m>2$.

\begin{theorem}
Considering the tracking error system (41) subject to constant disturbance, the proposed control law (43) ensures that the tracking error can fast converge to the origin in finite time. 
\end{theorem}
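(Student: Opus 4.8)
The plan is to substitute the control law (44) into the dynamics of the sliding variable, recognise that the closed loop collapses exactly onto the system treated in \emph{Proposition 1} with zero disturbance, and then close the argument with a standard two-phase (reaching phase plus sliding phase) reasoning supported by \emph{Lemma 3}.

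First I would differentiate the sliding surface (43) along the elevation error system (42). Since $s = e_2 + \int_0^t z\,d\tau$, one has $\dot s = \dot e_2 + z = g_1 v_1 + f_1 + d_1 + z$; inserting (44), the feedforward terms $z$ and $f_1$ cancel, leaving
\[
\dot s = -L_1(t)|s|^{\frac{m-1}{m}}{\mathop{\rm sgn}}(s) - L_2(t)s + d_1 - \int_0^t \left[ L_3(t)|s|^{\frac{m-2}{m}}{\mathop{\rm sgn}}(s) + L_4(t)s \right] d\tau .
\]
Introducing the auxiliary state $\varphi := d_1 - \int_0^t [ L_3(t)|s|^{\frac{m-2}{m}}{\mathop{\rm sgn}}(s) + L_4(t)s ] d\tau$ and using $\dot d_1 = 0$, we obtain $\dot\varphi = -L_3(t)|s|^{\frac{m-2}{m}}{\mathop{\rm sgn}}(s) - L_4(t)s$. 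Hence $(s,\varphi)$ satisfies precisely system (14) of \emph{Proposition 1} with gains (15), $m>2$, and disturbance $d(t)\equiv 0$ (so $\delta=0$). By statement (\romannumeral1) of \emph{Proposition 1}, $s$ and $\dot s$ fast converge to zero in finite time; denote by $T_r$ this reaching time, beyond which $s\equiv 0$.

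Next I would treat the sliding phase. For $t\ge T_r$ we have $s\equiv 0$, hence $\dot s = \dot e_2 + z = 0$, i.e.
\[
\dot e_1 = e_2, \qquad \dot e_2 = -\gamma_3 |e_1|^{p_4}{\mathop{\rm sgn}}(e_1) - \gamma_4 |e_2|^{2p_4/(1+p_4)}{\mathop{\rm sgn}}(e_2),
\]
which is exactly system (13) of \emph{Lemma 3} with $\gamma_1=\gamma_3$, $\gamma_2=\gamma_4$, $p_3=p_4$ and $d=0$. By statement (\romannumeral1) of \emph{Lemma 3}, $e_1$ and $e_2$ converge to the origin within some additional finite time $T_s$. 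Consequently the tracking error reaches the origin in the finite time $T_r+T_s$, and the \emph{fast} qualifier is inherited from the fast-finite-time settling-time estimate of \emph{Lemma 1} used inside \emph{Proposition 1}.

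The main obstacle is the algebraic bookkeeping in the first step: one must verify that the integral feedforward term built into (44) is exactly what makes $\varphi$ absorb both the constant disturbance $d_1$ and the integral action while keeping $\dot\varphi$ free of any disturbance derivative — this is where the hypothesis $\dot d_1 = 0$ is indispensable and is precisely what separates this theorem from the time-varying case considered later. A secondary point to state carefully is the invariance of the sliding set: since the origin is a fast finite-time stable equilibrium of the $(s,\varphi)$ subsystem, $s$ stays identically zero for all $t\ge T_r$, which legitimises passing to the reduced-order dynamics on $s=0$ and invoking \emph{Lemma 3}.
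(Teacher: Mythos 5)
Your proposal is correct and follows essentially the same route as the paper's own proof: you differentiate the integral sliding surface, absorb the constant disturbance into the auxiliary variable $\varphi$ so that the closed loop matches system (14) with $d=0$, invoke \emph{Proposition 1}(\romannumeral1) for the fast finite-time reaching phase, and then apply \emph{Lemma 3}(\romannumeral1) to the reduced dynamics on the sliding set. Your added remarks on the invariance of $s\equiv 0$ after the reaching time merely make explicit a step the paper states more tersely as ``when $\dot s = 0$.''
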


\begin{proof}
We will adopt two steps to complete the proof. In \textbf{Step 1}, we will prove that the $\dot s$ can fast converge to the origin in finite time. In \textbf{Step 2}, we prove that the tracking error $e_1$ will converge to the origin in finite time when $\dot s = 0$ . 

\textbf{Step 1.} Taking the derivative of (42),we can obtain
\begin{equation}
\dot s = {\dot e_2} + z
\end{equation}

Substituting (41) and (43) into (44) leads to
\begin{equation}
\begin{aligned}
\dot s =&  - {L_1}(t){\left| s \right|^{\frac{{m - 1}}{m}}}{\mathop{\rm sgn}} (s) - {L_2}(t)s + {d_1}\\
 &- \int_0^t {\left[ {{L_3}(t){{\left| s \right|}^{\frac{{m - 2}}{m}}}{\mathop{\rm sgn}} (s) + {L_4}(t)s} \right]d\tau } 
\end{aligned}
\end{equation}

We define an intermediate variable
\begin{equation}
\varphi  =  - \int_0^t {\left[ {{L_3}(t){{\left| s \right|}^{\frac{{m - 2}}{m}}}{\mathop{\rm sgn}} (s) + {L_4}(t)s} \right]d\tau  + } {d_1}
\end{equation}

Then (45) becomes
\begin{equation}
\begin{aligned}
&\dot s =  - {L_1}(t){\left| s \right|^{\frac{{m - 1}}{m}}}{\mathop{\rm sgn}} (s) - {L_2}(t)s + \varphi \\
&\dot \varphi  =  - {L_3}(t){\left| s \right|^{\frac{{m - 2}}{m}}}{\mathop{\rm sgn}} (s) - {L_4}(t)s
\end{aligned}
\end{equation}

By using \emph{Proposition 1}, $\dot s$ can fast converge to the origin in finite time.

\textbf{Step 2.} When $\dot s = 0$, (44) becomes
\begin{equation}
{\dot e_2} =  - {\gamma _3}{\left| {{e_1}} \right|^{{p_4}}}{\mathop{\rm sgn}} ({e_1}) - {\gamma _4}{\left| {{e_2}} \right|^{2{p_4}/(1 + {p_4})}}{\mathop{\rm sgn}} ({e_2})
\end{equation}

By using \emph{Lemma 3} coupled with system (41), $e_1$ and $e_2$ can converge to the origin in finite time. The proof is completed.
\end{proof}

\subsection{Controller design with time-varying lumped disturbance}
Considering the tracking error system of the elevation channel
\begin{equation}
\begin{aligned}
&{{\dot e}_1} = {e_2}\\
&{{\dot e}_2} = {g_1}{v_1} + {f_1} + {d_1}
\end{aligned}
\end{equation}
where ${d_1}$ is time-varying lumped disturbance satisfying \emph{Assumption 1}, i.e. $\left| {{{\dot d}_1}} \right| \le {\delta _1},{\delta _1} > 0$. We adopt the same non-singular integral sliding mode surface as (42), which is defined as follows
\begin{equation}
{s_v} = {e_2} + \int_0^t {{z_v}d\tau }
\end{equation}
where ${z_v} = {\gamma _5}{\left| {{e_1}} \right|^{{p_5}}}{\mathop{\rm sgn}} ({e_1}) + {\gamma _6}{\left| {{e_2}} \right|^{2{p_5}/(1 + {p_5})}}{\mathop{\rm sgn}} ({e_2})$ and ${\gamma _5} > 0,{\gamma _6} > 0,{p_5} \in (0,1)$.

Based on this sliding mode variable and \emph{Proposition 1}, a novel smooth adaptive fast second-order sliding mode controller can be designed as
\begin{equation}
\begin{aligned}
{v_1} =& {g_1}^{ - 1}\left\{ { - {L_{v1}}(t){{\left| {{s_v}} \right|}^{\frac{{m - 1}}{m}}}{\mathop{\rm sgn}} ({s_v}) - {L_{v2}}(t){s_v} - {z_v}}\right.\\
 &\phantom{=\;\;}\left.{ - {f_1} - \int_0^t {\left[ {{L_v}_3(t){{\left| {{s_v}} \right|}^{\frac{{m - 2}}{m}}}{\mathop{\rm sgn}} ({s_v}) + {L_{v4}}(t){s_v}} \right]d\tau } } \right\}
\end{aligned}
\end{equation}
where ${L_{v1}}(t),{L_{v2}}(t),{L_{v3}}(t),{L_{v4}}(t)$ are formulated the same as (15) and $m>2$.

\begin{theorem}
Considering the tracking error system (49) subject to time-varying lumped disturbance, the proposed control law (51) guarantees that the tracking error can fast converge to a region of the origin in finite time.
\end{theorem}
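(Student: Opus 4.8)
The plan is to mirror the two-step argument of the proof of \emph{Theorem 1}, but to replace the exact sliding-mode reduction by a perturbed one, because the time-varying part of $d_1$ prevents $\dot s_v$ from vanishing exactly. \textbf{Step 1 (reaching phase).} First I would differentiate the integral sliding surface (51) to get $\dot s_v = \dot e_2 + z_v$ and substitute the error dynamics (50) together with the control law (52); the $z_v$, $f_1$ and $g_1 v_1$ terms cancel exactly, leaving $\dot s_v = -L_{v1}(t)|s_v|^{(m-1)/m}{\mathop{\rm sgn}}(s_v) - L_{v2}(t)s_v + d_1 - \int_0^t[L_{v3}(t)|s_v|^{(m-2)/m}{\mathop{\rm sgn}}(s_v)+L_{v4}(t)s_v]\,d\tau$. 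Introducing the intermediate variable $\varphi_v = -\int_0^t[L_{v3}(t)|s_v|^{(m-2)/m}{\mathop{\rm sgn}}(s_v)+L_{v4}(t)s_v]\,d\tau + d_1$, the closed loop takes exactly the form of system (14) in \emph{Proposition 1} with disturbance $d(t)=\dot d_1$ --- which is bounded by $\delta_1$ by \emph{Assumption 1} --- and with the adaptive gains (15)--(17). Hence \emph{Proposition 1}(ii) applies, so $s_v$ and $\dot s_v$ fast converge, in some finite time $T_1$, into a bounded residual region of the form $|s_v|<\mu_1$, $|\dot s_v|<\mu_2$, where $\mu_1,\mu_2$ are obtained from the explicit residual-set estimate (the set $D_5$) in the proof of \emph{Proposition 1}.

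\textbf{Step 2 (sliding phase).} On this residual region the identity $\dot s_v = \dot e_2 + z_v$ yields $\dot e_2 = -\gamma_5|e_1|^{p_5}{\mathop{\rm sgn}}(e_1) - \gamma_6|e_2|^{2p_5/(1+p_5)}{\mathop{\rm sgn}}(e_2) + \tilde d$, with $|\tilde d|\le\mu_2$. Coupled with $\dot e_1 = e_2$, this is precisely system (13) of \emph{Lemma 3} driven by a bounded disturbance, so \emph{Lemma 3}(ii) guarantees that $e_1$ and $e_2$ converge to a region of the origin in a further finite time $T_2$. Concatenating the two phases, and noting that the pitch channel is treated identically with $s_v$, $z_v$, $L_{vi}$ replaced by their pitch-channel counterparts, shows that the tracking error fast converges to a region of the origin within $T_1+T_2$, which is the assertion of the theorem.

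\textbf{Main obstacle.} The delicate point, which does not arise in \emph{Theorem 1}, is Step 2: with a genuinely time-varying $d_1$ the sliding variable never reaches zero exactly, so the reduced dynamics are the system (13) \emph{perturbed} rather than nominal, and one must use the uniformly-ultimately-bounded branch \emph{Lemma 3}(ii) instead of the finite-time-stable branch. This forces one to make the residual bound $\mu_2$ explicit, tracing the chain $V(\xi)=\xi^{T}P\xi \to$ the definition (18) of $\xi \to (s_v,\varphi_v) \to \dot s_v$, and to verify, exactly as in the argument below (28), that $L_0(t)$ remains finite throughout the reaching phase so that $\mu_1,\mu_2$ are genuine finite bounds. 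Keeping the two nested finite-time estimates and the two residual sets mutually consistent is where the bulk of the bookkeeping lies.
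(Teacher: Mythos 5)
Your proposal follows essentially the same route as the paper's own proof: Step 1 differentiates the surface (51), substitutes (50) and (52), introduces the intermediate variable $\varphi_v$ so that the closed loop becomes system (14) with $d(t)=\dot d_1$ bounded by $\delta_1$, and invokes \emph{Proposition 1}(ii); Step 2 treats the perturbed reduced dynamics (58) with the bounded residual term and applies \emph{Lemma 3}(ii). Your added remarks on making the residual bound explicit are a reasonable elaboration but not a departure from the paper's argument.
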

\begin{proof}
We will adopt two steps to complete the proof. In \textbf{Step 1}, we will prove that the $\dot s$ can fast converge to a neighborhood of the origin in finite time. In \textbf{Step 2}, we prove that the tracking error $e_1$ will converge to a neighborhood of the origin in finite time when $\left| {{{\dot s}_v}} \right| \le \Delta_1$.

\textbf{Step 1.} Taking the derivative of (50),we can obtain
\begin{equation}
{\dot s_v} = {\dot e_2} + {z_v}
\end{equation}

Substituting (49) and (51) into (52) leads to
\begin{equation}
\begin{aligned}
{\dot s_v} = & - {L_{v1}}(t){\left| {{s_v}} \right|^{\frac{{m - 1}}{m}}}{\mathop{\rm sgn}} ({s_v}) - {L_{v2}}(t){s_v} + {d_1}\\ 
&- \int_0^t {\left[ {{L_{v3}}(t){{\left| {{s_v}} \right|}^{\frac{{m - 2}}{m}}}{\mathop{\rm sgn}} ({s_v}) + {L_{v4}}(t){s_v}} \right]d\tau } 
\end{aligned}
\end{equation}

We define an intermediate variable
\begin{equation}
{\varphi _v} =  - \int_0^t {\left[ {{L_{v3}}(t){{\left| {{s_v}} \right|}^{\frac{{m - 2}}{m}}}{\mathop{\rm sgn}} ({s_v}) + {L_{v4}}(t){s_v}} \right]d\tau  + } {d_1}
\end{equation}

Then (53) becomes
\begin{equation}
\begin{aligned}
&{{\dot s}_v} =  - {L_{v1}}(t){\left| {{s_v}} \right|^{\frac{{m - 1}}{m}}}{\mathop{\rm sgn}} ({s_v}) - {L_{v2}}(t){s_v} + {\varphi _v}\\
&{{\dot \varphi }_v} =  - {L_{v3}}(t){\left| {{s_v}} \right|^{\frac{{m - 2}}{m}}}{\mathop{\rm sgn}} ({s_v}) - {L_{v4}}(t){s_v} + {{\dot d}_1}
\end{aligned}
\end{equation}

According to \emph{Assumption 1}, the disturbance $\left| {{{\dot d}_1}} \right| \le {\delta _1}$. By using \emph{Proposition 1}, ${\dot s_v}$ can fast converge to a neighborhood of the origin in finite time, and the region is defined as $\Delta_1$.

\textbf{Step 2.} When $\left| {{{\dot s}_v}} \right| \le \Delta_1 $, (52) becomes
\begin{equation}
{\dot e_2} + {\gamma _5}{\left| {{e_1}} \right|^{{p_5}}}{\mathop{\rm sgn}} ({e_1}) + {\gamma _6}{\left| {{e_2}} \right|^{2{p_5}/(1 + {p_5})}}{\mathop{\rm sgn}} ({e_2}) = \phi 
\end{equation}
where $\left| \phi  \right| \le \Delta_1 $.

Then (56) can be equivalently transformed into
\begin{equation}\small
{\dot e_2} + \left( {{\gamma _5} - \frac{\phi }{{{{\left| {{e_1}} \right|}^{{p_5}}}{\mathop{\rm sgn}} ({e_1})}}} \right){\left| {{e_1}} \right|^{{p_5}}}{\mathop{\rm sgn}} ({e_1}) + {\gamma _6}{\left| {{e_2}} \right|^{2{p_5}/(1 + {p_5})}}{\mathop{\rm sgn}} ({e_2}) = 0
\end{equation}

For (57), if $\gamma _5 > \left| \phi  \right|/\left( {{{\left| {{e_1}} \right|}^{{p_5}}}{\mathop{\rm sgn}} ({e_1})} \right)$, then ${\gamma _5} - \phi /\left( {{{\left| {{e_1}} \right|}^{{p_5}}}{\mathop{\rm sgn}} ({e_1})} \right)$ will remain positive, and (57) is still kept in the form of (48). By using \emph{Lemma 3}, the finite time converge of $e_1$ is guaranteed, which also means that $e_1$ can converge to the region $\left| {{e_1}} \right| \le {\left( {\Delta_1 /{\gamma _5}} \right)^{1/{p_5}}}$, i.e. the tracking error $e_1$ will ultimately converge to the region $\left| {{e_1}} \right| \le {\left( {\Delta_1 /{\gamma _5}} \right)^{1/{p_5}}}$ in finite time. The proof is completed.
\end{proof}

\subsection{The design of an observer}
For the elevation control channel
\begin{equation}
{\dot e_2} = {g_1}{v_1} + {f_1} + {d_1} 
\end{equation}
where ${d_1}$ is a time-varying lumped disturbance satisfying \emph{Assumption 1}. Then we will develop a novel sliding mode observer to estimate the disturbance based on \emph{Proposition 1}, which is call adaptive-gain smooth second-order sliding mode observer (ASSOSMO). 

First, an auxiliary estimation system is defined as
\begin{equation}
{\dot \hat e_2} = {g_1}{v_1} + {f_1} + {\hat d_1} 
\end{equation}

Then a sliding mode variable is defined as
\begin{equation}
{s_d} = {e_2} - {\hat e_2}
\end{equation}

Finally, the ASSOSMO can be designed as
\begin{equation}\small
\begin{aligned}
&{{\hat d}_1}{\rm{ = }}{L_{d1}}(t){\left| {{s_d}} \right|^{\frac{{m - 1}}{m}}}{\mathop{\rm sgn}} ({s_d}) + {L_{d2}}(t){s_d} + {\varphi _d}\\
&{{\dot \varphi }_d} = {L_{d3}}(t){\left| {{s_d}} \right|^{\frac{{m - 2}}{m}}}{\mathop{\rm sgn}} ({s_d}) + {L_{d4}}(t){s_d}
\end{aligned}
\end{equation}
where ${L_{1d}}(t),{L_{2d}}(t),{L_{3d}}(t),{L_{4d}}(t)$ are formulated the same as (15) and $m>2$.

\begin{theorem}
Under the condition of \emph{Assumption 1}, the proposed observer (61) can guarantee that the estimation disturbance  ${\hat d_1}$ converges to a neighborhood of the true disturbance ${d_1}$.
\end{theorem}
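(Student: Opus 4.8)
The plan is to mirror the two-step strategy of the proofs of \emph{Theorem 1} and \emph{Theorem 2}, but now applied to the observer error dynamics rather than to a sliding surface. First I would introduce the estimation error $s_d = e_2 - \hat e_2$ of (61) and differentiate it using (59) and (60); since the known terms $g_1 v_1 + f_1$ cancel, this yields $\dot s_d = d_1 - \hat d_1$. Substituting the observer law (62) then gives
\begin{equation}
\dot s_d =  - {L_{d1}}(t){\left| {{s_d}} \right|^{\frac{{m - 1}}{m}}}{\mathop{\rm sgn}} ({s_d}) - {L_{d2}}(t){s_d} + \left( {{d_1} - {\varphi _d}} \right).
\end{equation}

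Next I would set the intermediate variable $\tilde\varphi_d = d_1 - \varphi_d$, the exact analogue of $\varphi$ in the proof of \emph{Theorem 1}, whose derivative is $\dot{\tilde{\varphi}}_{d} = \dot d_1 - \dot\varphi_d = - {L_{d3}}(t){\left| {{s_d}} \right|^{\frac{{m - 2}}{m}}}{\mathop{\rm sgn}}({s_d}) - {L_{d4}}(t){s_d} + \dot d_1$. Then the pair $(s_d,\tilde\varphi_d)$ obeys precisely the system of \emph{Proposition 1}, with the adaptive gains $L_{d1},\dots,L_{d4}$ taken in the prescribed form (15)--(18) and with the disturbance term playing the role of $d(t) = \dot d_1$. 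By \emph{Assumption 1} one has $|\dot d_1| \le \delta_1$ with $\delta_1$ an unknown non-negative constant, so the standing hypothesis $|d(t)| \le \delta$ of \emph{Proposition 1} is satisfied.

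Invoking statement (\romannumeral2) of \emph{Proposition 1}, $s_d$ and $\dot s_d$ fast converge in finite time to a region of the origin whose size is governed by the free constants $\theta_1,\theta_2,\theta_3$ appearing in the residual-set estimate of that proof. Because $\dot s_d = d_1 - \hat d_1$, the convergence of $\dot s_d$ into a neighborhood of zero is exactly the assertion that $\hat d_1$ converges into a neighborhood of the true disturbance $d_1$, which completes the argument; the same construction, run channel by channel, also delivers the estimate of $d_2$.

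The main obstacle I anticipate is bookkeeping rather than conceptual. One must verify that after the change of variables $s_d = e_2 - \hat e_2$ and $\tilde\varphi_d = d_1 - \varphi_d$ the observer error dynamics reproduce the system of \emph{Proposition 1} term by term --- in particular that the ``$+L_{di}$'' sign convention chosen in (62) is precisely the one that makes the error equation read with ``$-L_{di}$'', as required --- and that the gain inequality (16) and the adaptive law (17) are imposed on the observer gains exactly as on the controller gains. It is also worth recording, as in \emph{Remark 4}, that in practice the dead-zone version (39) of the adaptive law is used so that $|s_d| = 0$ need not actually be reached, and that the width of the residual neighborhood of $d_1$ can be shrunk by tuning those same design parameters.
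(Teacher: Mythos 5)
Your proposal matches the paper's own proof essentially step for step: differentiate $s_d$ to get $\dot s_d = d_1 - \hat d_1$, substitute the observer law (62), introduce the intermediate variable $d_1 - \varphi_d$ (the paper's $\varphi_{d1}$) so that the error dynamics take the form of \emph{Proposition 1} with $\dot d_1$ as the bounded disturbance, and conclude via statement (\romannumeral2) of that proposition that $\dot s_d = d_1 - \hat d_1$ enters a neighborhood of zero. Your additional remarks on sign conventions, the gain conditions, and the dead-zone adaptive law are consistent with the paper and do not change the argument.
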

\begin{proof}
Taking the derivative of (60), we can obtain
\begin{equation}
{\dot s_d} = {\dot e_2} - {\dot \hat e_2} = {d_1} - {\hat d_1}
\end{equation}

Substituting (61) into (62) leads to
\begin{equation}
\begin{aligned}
&{{\dot s}_d} =  - {L_{d1}}(t){\left| {{s_d}} \right|^{\frac{{m - 1}}{m}}}{\mathop{\rm sgn}} ({s_d}) - {L_{d2}}(t){s_d} + {\varphi _{d1}}\\
&{{\dot \varphi }_{d1}} =  - {L_{d3}}(t){\left| {{s_d}} \right|^{\frac{{m - 2}}{m}}}{\mathop{\rm sgn}} ({s_d}) - {L_{d4}}(t){s_d} + {{\dot d}_1}
\end{aligned}
\end{equation}

In terms of \emph{Assumption 1}, the disturbance $\left| {{{\dot d}_1}} \right| \le {\delta _1}$. Meanwhile, the ${L_{1d}}(t),{L_{2d}}(t),{L_{3d}}(t),{L_{4d}}(t)$ are formulated the same as (15) and $m>2$. By using the \emph{ Proposition 1}, we can get that ${\dot s_d}$ converges to a neighborhood of the origin. Then by using (62), the estimation disturbance ${\hat d_1}$ can converge to a neighborhood of the true disturbance ${d_1}$, and the proof is accomplished.

\begin{remark}
The ASOSMO can also be developed following the same process by using the method in [16], which is selected as the comparing method. The effectiveness and smoothness of our proposed observer will be verified by the comparative simulation experiment.
\end{remark}

\end{proof}
\section{Experimental Results}
This section presents three sets of comparative simulation experiments to expound the effectiveness of our proposed control schemes. The AFSOSMC method proposed in [16], combined with an integral non-singular terminal sliding mode (INTSM) surface, is implemented as the comparison method, where the whole control scheme is proposed and analyzed in \cite{Tran2020}. The INTSM-AFSOSMC method can be specifically expressed as follows

(\romannumeral1) An INTSM surface is selected as
\begin{equation}
{s_{ci}} = {e_{i + 1}} + \int_0^t {{{\left| {{z_{ci}}} \right|}^{\frac{1}{3}}}{\mathop{\rm sgn}} ({z_{ci}})d\tau}  
\end{equation}
where ${z_{ci}} = {e_i} + {\left| {{e_{i + 1}}} \right|^{\frac{3}{2}}}{\mathop{\rm sgn}} ({e_{i + 1}})$

(\romannumeral2) The INTSM-AFSOSMC law is
\begin{equation}
\begin{aligned}
u_{ci} = &{g_i}^{ - 1}\left\{ { - {L_{c1}}(t){{\left| {{s_{ci}}} \right|}^{\frac{1}{2}}}{\mathop{\rm sgn}} ({s_{ci}}) - {L_{c2}}(t){s_{ci}} - {{\left| {{z_{ci}}} \right|}^{\frac{1}{3}}}{\mathop{\rm sgn}} ({z_{ci}})}\right.\\
 &\phantom{=\;\;}\left.{ - {f_i} - \int_0^t {\left[ {{L_{c3}}(t){\mathop{\rm sgn}} ({s_{ci}}) + {L_{c4}}(t){s_{ci}}} \right]d\tau } } \right\}
\end{aligned}
\end{equation}
where $i=1,3$  and the time-varying gains are defined the same as (15) with $m=2$.

Table \ref{table:1} shows the parameter values of the helicopter system, and the fixed sampling time for all the simulation experiments is set to 0.001 second. In these experiments, the same initial position of the system starts from the elevation angle $-24^o$ and pitch angle $0^o$. The desired trajectories of elevation angle and pitch angle are given as
\begin{equation}
{x_{\alpha d}}(t){\rm{ = }}0.2\sin (0.08t - \frac{\pi }{2}),{x_{\beta d}}(t) = 0.1\sin (0.06t) 
\end{equation}

The purposes of three group experiments are summarized as follows
\begin{enumerate}[1)]
	\item The first experiment shows the advantages of our control scheme for the attitude tracking of the helicopter system with constant disturbance by comparing with the control method in \cite{Tran2020}. 
    \item The second experiment demonstrates the merits of our control scheme for the attitude tracking of the helicopter system with time-varying disturbance, which is also compared with the control method in \cite{Tran2020}.
    \item The third experiment is used to illustrate the smoother output of our proposed observer compared with the observer proposed from [16] in terms of estimating the time-varying lumped disturbances of the helicopter system.
\end{enumerate}
\subsection{Experiment \uppercase\expandafter{\romannumeral1}: Attitude tracking control with constant disturbance}
For experiment \uppercase\expandafter{\romannumeral1}, the parameters of our proposed control scheme for elevation angle and pitch angle tracking are set as: ${\gamma _1} = {\gamma _2} = 5$, ${p_3} = 0.6$, $m = 3$, ${k_1} = 2$, ${k_2}$ = 2.5, ${k_3} = 4$, ${k_4} = 30$, $\kappa  = 5$. The INTSM-AFSOSMC law adopts the same gain parameters.
\begin{figure}
	\centering
	\subfigure[Response of elevation tracking error by AFSOSMC and proposed method]{
	\includegraphics[width=0.4\textwidth]{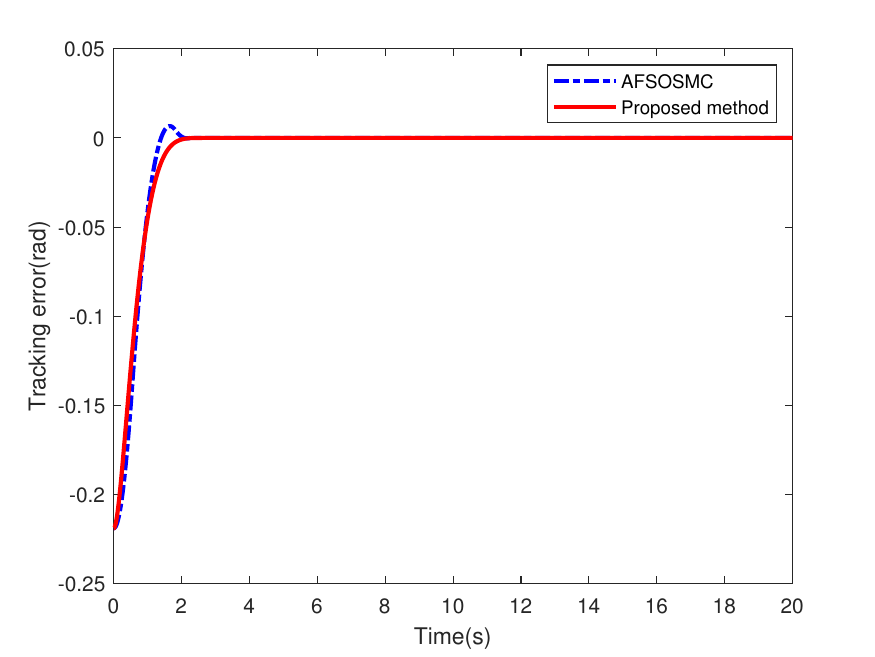}}
	\subfigure[Local magnification of elevation tracking error]{
	\includegraphics[width=0.4\textwidth]{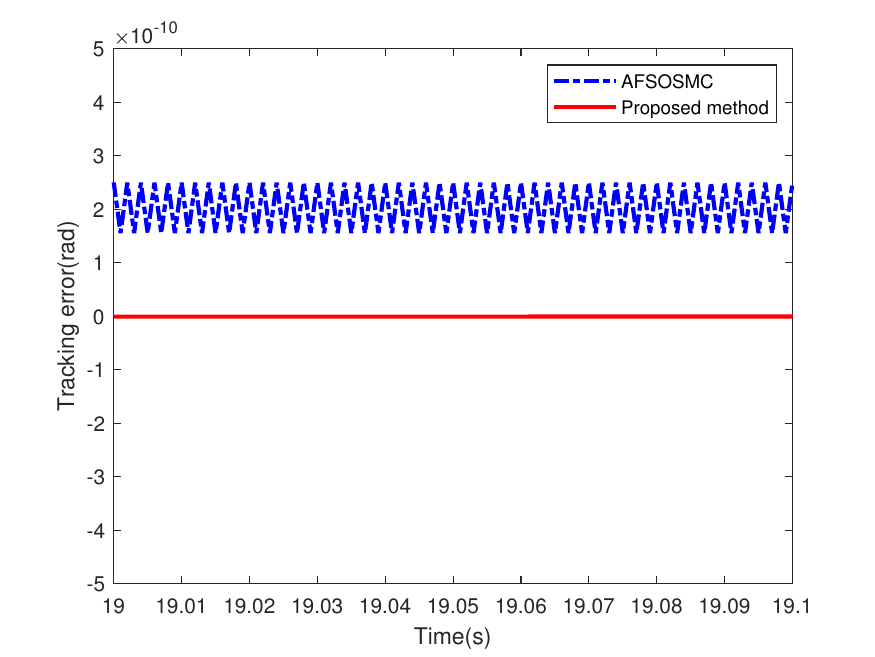}}
	\subfigure[Response of pitch tracking error by AFSOSMC and proposed method]{
	\includegraphics[width=0.4\textwidth]{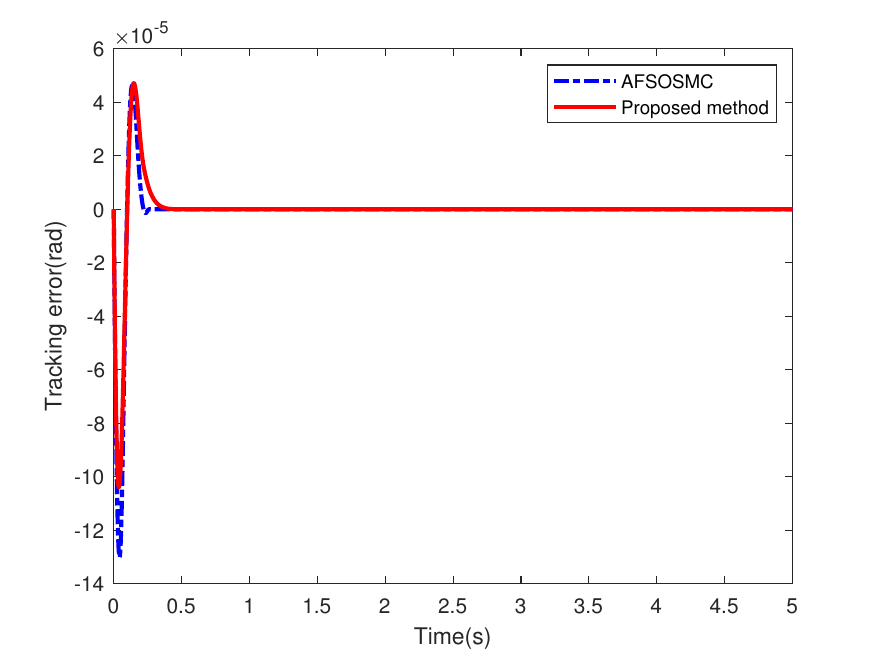}}
	\subfigure[Local magnification of pitch tracking error]{
	\includegraphics[width=0.4\textwidth]{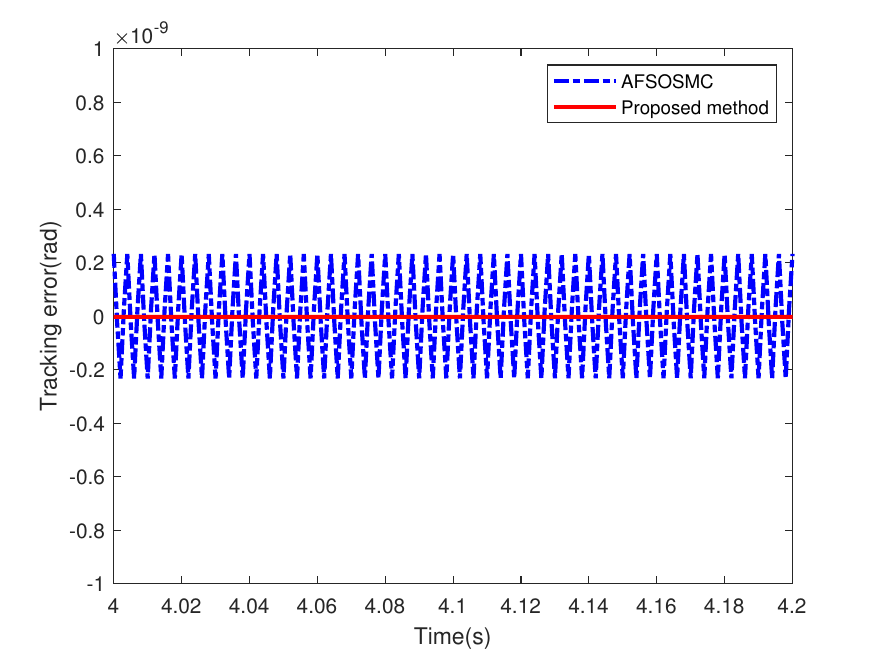}}
	\caption{Results of experiment \uppercase\expandafter{\romannumeral1}}
\end{figure}

Fig. 2: (a)-(d) show the experimental results. Fig. 2: (a) and (c) demonstrate the responses of elevation angle tracking error by using INTSM-AFSOSMC and the proposed method, respectively. Fig. 2: (b) and (d) present the local magnification of corresponding tracking error to show the steady-state response more clearly. Fig. 2: (a) shows that elevation angle and pitch angel tracking errors converge to the origin in finite time by using our proposed control scheme as fast as the method in \cite{Tran2020}, which means our proposed method also possesses the characteristic of fast finite-time convergence. In addition, Fig. 2: (b) illustrates that the proposed method can weaken the chattering effect existing in the AFSOSMC method. A similar conclusion for the pitch angel tracking can be drawn in light of the Fig. 2: (c) and (d).

\subsection{Experiment \uppercase\expandafter{\romannumeral2}: Attitude tracking control with time-varying lumped disturbances}
For experiment \uppercase\expandafter{\romannumeral2}, the parameters of the both controller are as same as those in experiment \uppercase\expandafter{\romannumeral1}. The time-varying lumped disturbances are set as ${d_1}(t) = {d_2}(t) = 0.2\sin (t)$.
\begin{figure}
	\centering
	\subfigure[Response of elevation tracking error by AFSOSMC and proposed method]{
	\includegraphics[width=0.4\textwidth]{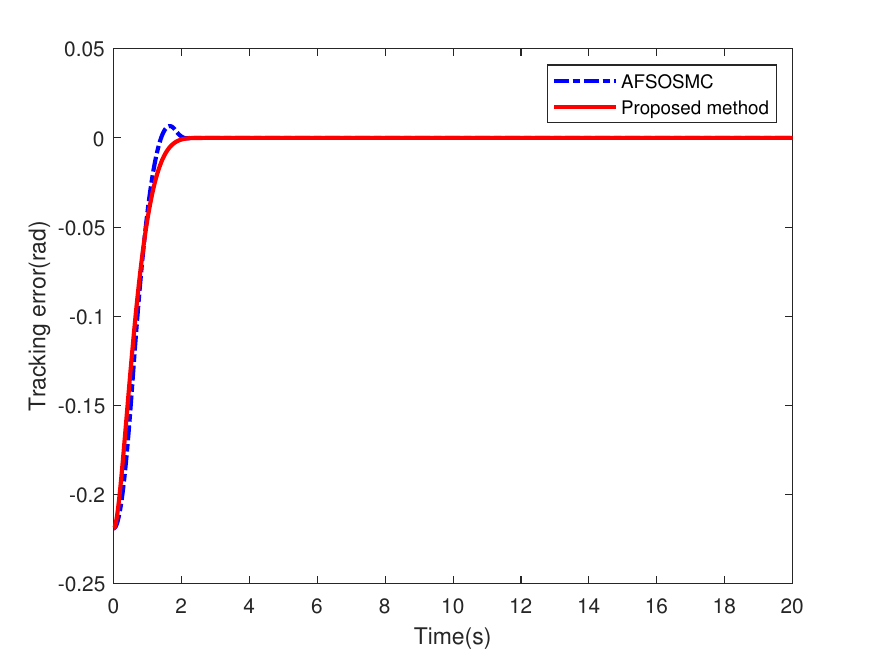}}
	\subfigure[Local magnification of elevation tracking error]{
	\includegraphics[width=0.4\textwidth]{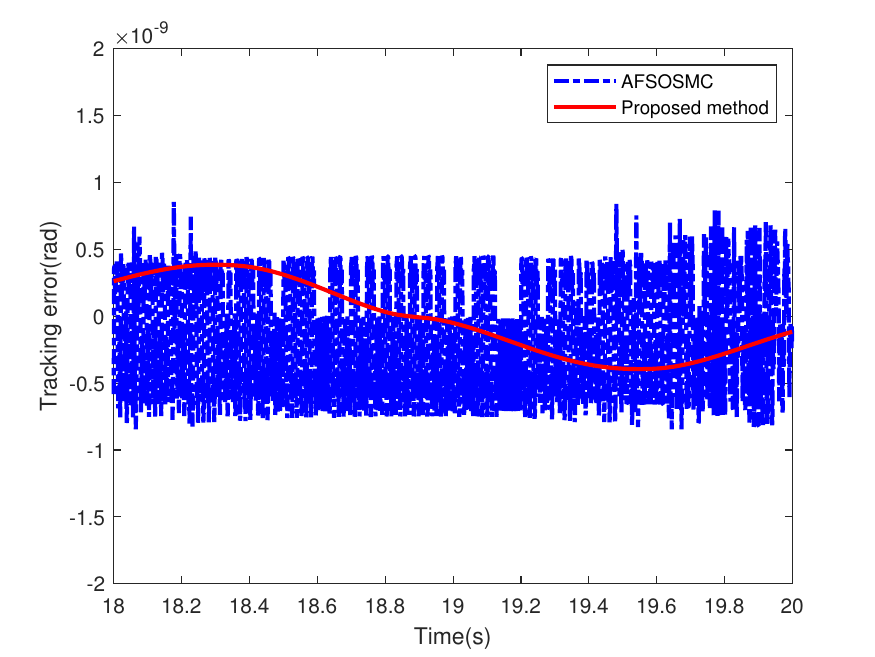}}
	\subfigure[Response of pitch tracking error by AFSOSMC and proposed method]{
	\includegraphics[width=0.4\textwidth]{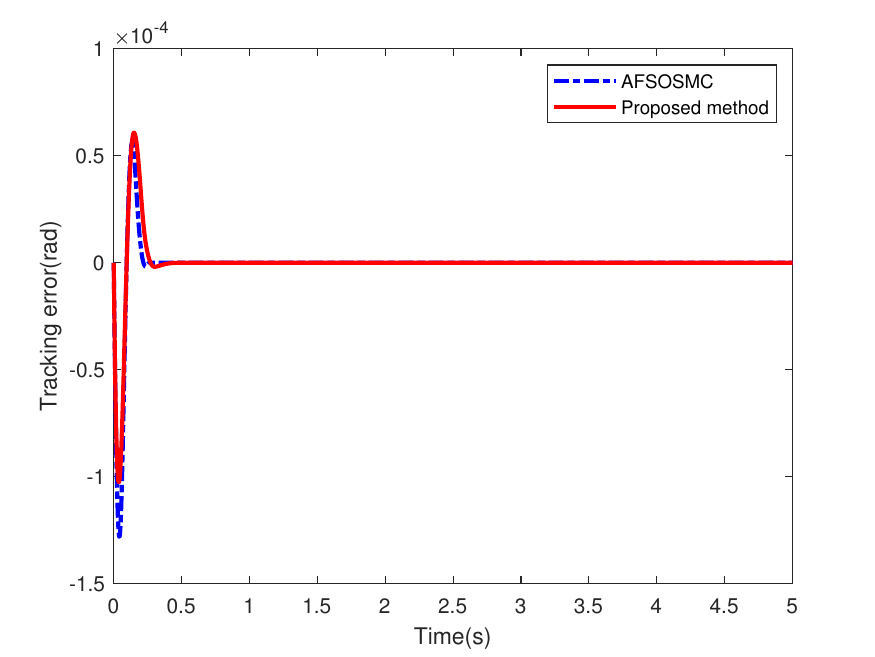}}
	\subfigure[Local magnification of pitch tracking error]{
	\includegraphics[width=0.4\textwidth]{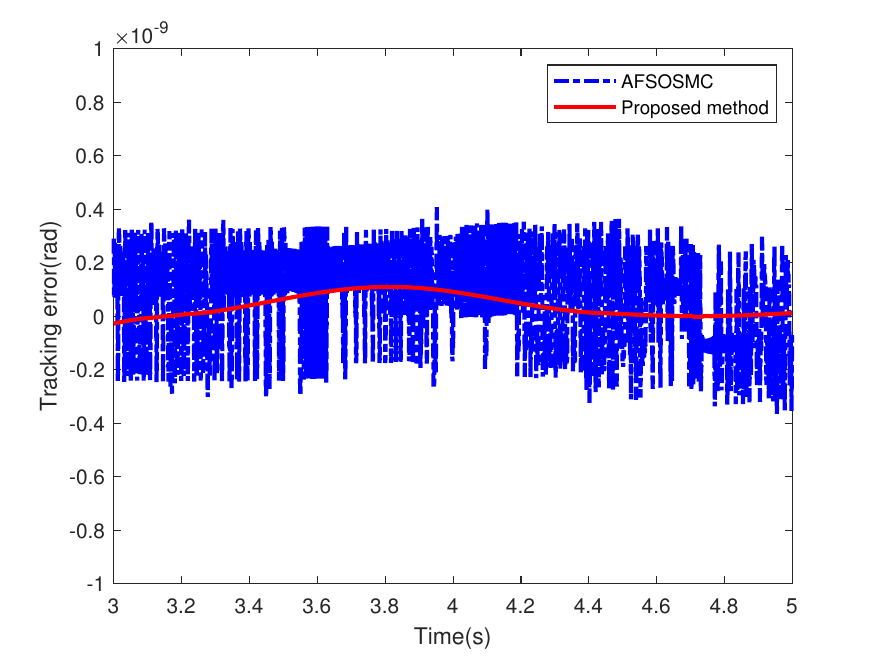}}
	\caption{Results of experiment \uppercase\expandafter{\romannumeral2}}
\end{figure}

Fig. 3: (a)-(d) show the experimental results. Fig. 3: (a) and (c) illustrate the responses of elevation angle tracking error by using INTSM-AFSOSMC and the proposed method, respectively. Fig. 3: (b) and (d) present the local magnification of corresponding tracking error to show the steady-state response more clearly. From Fig. 3: (a) and (b), one can see that the proposed method can guarantee the fast finite-time uniformly ultimately boundedness coupled with large chattering suppression. However, the time-varying disturbance destroys the original property of finite-time convergence, which leads to the tracking errors only converging in a small neighborhood of the origin.  A similar conclusion for the pitch angel tracking can be drawn in light of the Fig. 3: (c) and (d).

\subsection{Experiment \uppercase\expandafter{\romannumeral3}: Performance analysis of the observer}
For experiment \uppercase\expandafter{\romannumeral3}, we compare the performance of our proposed observer with the  ASOSMO. The parameters of the proposed observer are set as $m = 3,{k_1} = 2,{k_2} = 2.5,{k_3} = 4,{k_4} = 30,\kappa  = 10$, and the ASOSMO adopts the same parameters except that $m$ is set as $2$. To facilitate the comparison, we take the elevation angel tracking as an example and the time-varying disturbances is set as $\sin (t)$. Moreover, the following conventional sliding mode control method is selected as the control law
\begin{equation}
{u_t} = {g_1}^{ - 1}\left( { - {f_1} - {c_t}{e_2} - \eta {\mathop{\rm sgn}} ({s_t}) - {k_t}{s_t}} \right)
\end{equation}
where the sliding mode surface ${s_t} = {e_2} + {c_t}{e_1}$ and ${c_t} = 2,{k_t} = 0.5$, $\eta $ is set sufficiently large to ensure the stability of the helicopter system.
\begin{figure}
	\centering
	\subfigure[Disturbance estimation error by using ASOSMO and ASSOSMO]{
	\includegraphics[width=0.4\textwidth]{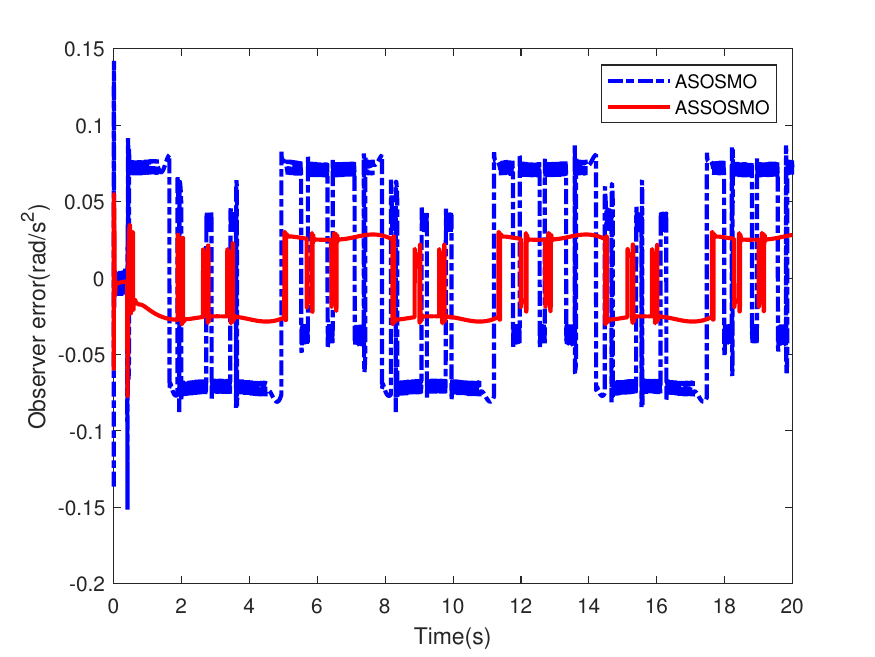}}
	\subfigure[Local magnification of disturbance estimation error]{
	\includegraphics[width=0.4\textwidth]{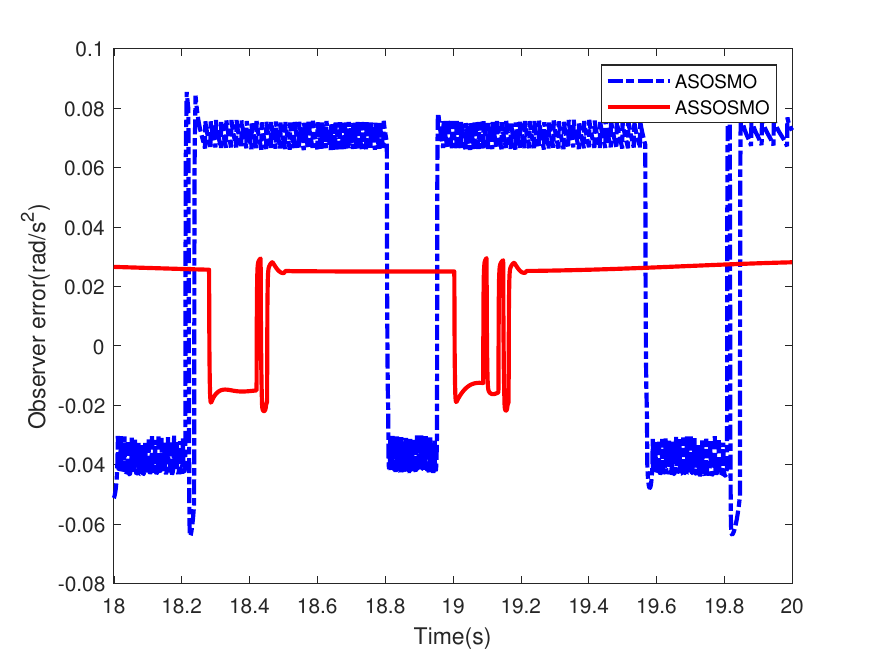}}
	\caption{Results of experiment \uppercase\expandafter{\romannumeral3}}
\end{figure}

Fig. 4: (a)-(b) show the experimental results. Fig. 4: (a) illustrates the response of observer estimation error by using ASOSMO and the proposed observer and Fig. 4: (b) present the local magnification to show the response more clearly. From Fig. 4: (a) and (b), one can see that the estimation error can fast converge to a region of the origin in finite time by using our proposed observer. Furthermore, the proposed observer can effectively attenuate chattering existing in the ASOSMO and provide a more accurate and smoother output for the disturbance estimation.

\section{Conclusion}
In this paper, a novel smooth adaptive fast second-order sliding mode control algorithm has been proposed. According to the types of disturbances, the design process of helicopter attitude tracking controllers can be divided into two parts. For the system with constant disturbances, a control scheme, consisting of a non-singular integral sliding mode surface and a novel smooth adaptive fast second-order sliding mode control, is proposed to achieve the fast finite-time convergence. For the system with the time-varying lumped disturbances, the control scheme can realize the fast finite-time uniformly ultimately boundedness. A novel adaptive-gain smooth second-order sliding mode observer is also present to estimate disturbances with a smooth output. The comparative simulation experiments are performed to demonstrate the effectiveness and superiority of the proposed control scheme with fast finite-time convergence, adaptation to disturbances, and chattering suppression for the attitude tracking of the 3-DOF helicopter system.


\bibliographystyle{Bibliography/IEEEtranTIE}
\bibliography{Bibliography/IEEEabrv,Bibliography/myRef}\ 

\end{document}